\def\syz{\operatorname{Syz}}
\def\Span{\operatorname{span}}
\def\lt{\operatorname{lt}}
\newcommand\restr[2]{{
  \left.\kern-\nulldelimiterspace 
  #1 
  \right|_{#2} 
  }}
\newtheorem{theorem}{Theorem}
\newtheorem*{algorithm*}{Algorithm}
\newtheorem{thm}[theorem]{Theorem}
\newtheorem{corollary}[theorem]{Corollary}
\newtheorem{lemma}[theorem]{Lemma}
\newtheorem{algorithm}[theorem]{Algorithm}
\newtheorem{problem}[theorem]{Problem}
\newtheorem{example}[theorem]{Example}
\newcommand{\myitem}[1]{%
\item[(#1)]\protected@edef\@currentlabel{#1}%
}
\def\eatspace#1{#1}
\def\step#1#2{\par\kern1pt\hangindent#2em\hangafter=1\noindent\rlap{\small#1}\kern#2em\relax\eatspace}
\let\set\mathbb
\def\<#1>{\langle#1\rangle}
\def\lt{\operatorname{lt}}
\begin{document}

\title{On the Problem of Separating Variables\\ in Multivariate Polynomial Ideals}

\author{Manfred Buchacher\footnote{M.\ Buchacher was supported by LIT-2022-11-YOU-214.}\ and
  Manuel Kauers\footnote{M.\ Kauers was supported by the Austrian FWF grants PAT 9952223 and I6130-N.}}
\date{%
  Institute for Algebra\\
  Johannes Kepler University\\
  A4040 Linz, Austria
}

\maketitle

\begin{abstract}
  For a given ideal $I\subseteq\set K[x_1,\dots,x_n,y_1,\dots,y_m]$
  in a polynomial ring with $n+m$ variables, we want to find all
  elements that can be written as $f-g$ for some $f\in\set K[x_1,\dots,x_n]$
  and some $g\in\set K[y_1,\dots,y_m]$, i.e., all elements of $I$
  that contain no term involving at the same time one of the $x_1,\dots,x_n$
  and one of the $y_1,\dots,y_m$.
  For principal ideals and for ideals of dimension zero, we give a algorithms
  that compute all these polynomials in a finite number of steps.
\end{abstract}

\section{Introduction}

The problem under consideration is as follows.
Given an ideal $I$ of a polynomial ring
\[\set K[x_1,\dots,x_n,y_1,\dots,y_m],\]
we want to know all elements of $I$ that can be written in the form $f-g$
for some $f\in\set K[x_1,\dots,x_n]$ and some $g\in\set K[y_1,\dots,y_m]$.
Such a polynomial $f-g$ is called \emph{separated} because it contains no
monomials that involve at the same time one of the $x_1,\dots,x_n$ and one
of the $y_1,\dots,y_m$.

It is not hard to see that the pairs $(f,g)\in\set K[x_1,\dots,x_n]\times\set K[y_1,\dots,y_m]$
such that $f-g$ is a separated element of an ideal $I$ of $\set K[x_1,\dots,x_n,y_1,\dots,y_m]$ form
a unital $\set K$-algebra with component-wise addition and multiplication. Indeed, $(1,1)$ is
clearly an element, and if $(f,g)$, $(f',g')$ are elements, then so are $(f+f',g+g')$ and
$(ff',gg')$, the latter because $ff'-gg'=(f-g)f'+g(f'-g')\in I$. We denote the set of all these
pairs $(f,g)$ by $A(I)$ and call it the algebra of separated elements of~$I$. Given a basis of
the ideal~$I$, we want to compute a set of generators of~$A(I)$.

Equations with separated variables have been studied at least since the
1950s~\cite{ehrenfeucht1956kryterium,davenport61,fried69,fried73,cassels69,bilu99,bilu00,cassou99}.
Early authors studied the algebraic curves defined by polynomials of the form $f(x)-g(y)$,
and in particular the question under which circumstances such a polynomial is irreducible,
and the structure of the corresponding function fields. 
Later, other aspects of the problem entered into the focus, for instance the
problem of finding integer roots of polynomials with separated variables~\cite{bilu2000diophantine}
or the relation of the separation problem to the problem of decomposing
polynomials~\cite{barton85,schicho95,binder96,aichinger11}.

The problem of finding separated polynomials in polynomial ideals has various applications.
One application is the intersection of $\set K$-algebras. For example,
computing
\[
\set K[u_1,\dots,u_n]\cap\set K[v_1,\dots,v_m]
\]
for given polynomials
\[
u_1,\dots,u_n,v_1,\dots,v_m\in\set K[t_1,\dots,t_k]
\]
is equivalent to finding all the separated polynomials $f-g$ in the ideal
\[
\<x_1-u_1,\dots,x_n-u_n,y_1-v_1,\dots,y_m-v_m>\cap\set K[x_1,\dots,x_n,y_1,\dots,y_m].
\]
Our own motivation comes from a different direction. In a study of generating
functions for lattice walk enumeration, Bousquet-Melou~\cite{melou16} finds the solution
of a certain functional equation using an interesting elimination technique.
She has certain power series $u_1,\dots,u_n$ in $\set K[z][[t]]$ and certain
power series $v_1,\dots,v_m$ in $\set K[z^{-1}][[t]]$ and needs to combine them to a
series that is free of~$z$. To do so, she finds polynomials $f$ and $g$ such that
$f(u_1,\dots,u_n)=g(v_1,\dots,v_m)$, and concludes that both sides of this
equation belong to $\set K[z][[t]]\cap\set K[z^{-1}][[t]]=\set K[[t]]$. We see
the development of algorithmic tools for finding separated polynomials as a key
step in turning Bousquet-Melou's technique into a general algorithm for solving
functional equations.

For ideals $I$ of a bivariate polynomial ring $\set K[x,y]$, the problem is well
understood. An algorithm for computing generators $I \cap (\mathbb{K}[x] + \mathbb{K}[y])$
was presented in~\cite{buchacher20}. Let us briefly sketch how this algorithm works.

Since every ideal $I\subseteq\mathbb{K}[x,y]$ is the intersection $I_0\cap I_1$ of a $0$-dimensional ideal $I_0$ and a principal ideal $I_1$, and because $A(I_0\cap I_1) = A(I_0)\cap A(I_1)$, it is sufficient to solve the problem for such ideals, and to be able to intersect the corresponding algebras. The algebra of separated polynomials of $I_0$ can be determined by first computing generators $p$, $q$ of its elimination ideals. The elements of $\mathbb{K}[x]\cdot p + \mathbb{K}[y]\cdot q$ are clearly separated, however, they do not necessarily make up all of $I_0\cap\left( \mathbb{K}[x] + \mathbb{K}[y]\right)$. For finding the remaining ones it is sufficient to make an ansatz whose degrees are bounded by the degrees of $p$ and $q$, reducing it, and solving a system of linear equations.

If $I_1$ is generated by some $p\in\mathbb{K}[x,y]\setminus (\mathbb{K}[x] \cup \mathbb{K}[y])$, then $A(I_1)$ is simple, i.e. generated by single element. Its generator corresponds to a separated polynomial $f-g$ that divides every other separated multiple of~$p$. To determine $f-g$, it is sufficient to know the degrees of $f$ and $g$. Finding $f-g$ then reduces to linear algebra.
It turns out that
there is always a grading on $\mathbb{K}[x,y]$ such that $\lt(f) - \lt(g)$ is the minimal separated multiple of the corresponding highest homogeneous component of~$p$. The problem of finding a degree bound for the minimal separated multiple of $p$ is thereby reduced to computing a separated multiple of a homogeneous polynomial.
It can be shown that a homogeneous bivariate polynomial has a separated multiple if and only if it is, possibly up to a rescaling of the variables, a product of pairwise distinct cyclotomic polynomials.
This can be checked by inspecting its roots.

Finally, the computation of the intersection of $A(I_0)$ and $A(I_1)$ is based on the fact that $A(I_0)$ has finite co-dimension as a $\set K$-linear subspace of $\mathbb{K}[x]\times\mathbb{K}[y]$ and that $A(I_1)$ is generated by a single element of $\mathbb{K}[x]\times\mathbb{K}[y]$. Any element of $A(I_0\cap I_1)$ is therefore a polynomial in the generator of $A(I_1)$, and (all) such polynomials can be found by (repeatedly) making an ansatz and solving a system of linear equations.

The present paper is about the separation problem for ideals $I$ of $\mathbb{K}[x_1,\dots,x_n,y_1,\dots,y_m]$ for arbitrary $n$ and~$m$. Our main result (Thm.~\ref{thm:5} and Corollary~\ref{cor:simpleness} below) is a constructive proof that shows that $A(I)$ is simple when $I$ is a principal ideal generated by an element of $\mathbb{K}[X,Y]\setminus\left( \mathbb{K}[X]\cup \mathbb{K}[Y]\right)$. We show that the computation of its generator can be reduced to the bivariate problem. This generalizes the corresponding result from~\cite{buchacher20}. Observing that the case of $0$-dimensional ideals can be treated in the same way as for bivariate polynomials, this then implies that we can proceed as in \cite{buchacher20} to compute a finite set of generators for $A(I)$ whenever $I$ is the intersection of a principal ideal and an ideal of dimension zero (Sect.~\ref{sec:dim0}). This implies in particular that $A(I)$ is finitely generated for such ideals.

However, in general $A(I)$ is not finitely generated, as
shown in Example~5.1 of~\cite{buchacher20}. This indicates that an extension of the techniques from the case $n=m=1$ to the case of arbitrary $n$ and $m$ is not straightforward, because there cannot be an algorithm
that computes for every given ideal a complete list of generators of~$A(I)$ in a finite number of steps.
In Sect.~\ref{sec:general}, we propose two procedures for enumerating generators of~$A(I)$.
We do not know if there is a procedure that terminates whenever $A(I)$ is finitely generated.

Throughout the paper, $\set K$ denotes a computable field of characteristic zero.
It is assumed that there is a way to check for a given element of $\set K$ whether
it is a root of unity. This is a fair assumption when $\set K$ is a number field
or a rational function field over a number field.
We write $X$ for $x_1,\dots,x_n$ and $Y$ for $y_1,\dots,y_m$ and consider the polynomial
ring $\set K[X,Y]$ in $n+m$ variables. When $p$ is a polynomial in the variable~$v$, we
denote the coefficient of $v^k$ in $p$ by $[v^k]p$ for any $k\in\mathbb{N}$.

\section{Principal Ideals}\label{sec:principal}

Consider a principal ideal $I=\<p>\subseteq\set K[X,Y]$.
If the generator belongs to $\set K[X]$ or to~$\set K[Y]$, then the
separation problem is not interesting. Let us exclude this case
and assume that $p\in\set K[X,Y]\setminus(\set K[X]\cup\set K[Y])$.
Our goal is to obtain information about $A(I)$ using our understanding
of the case $n=m=1$. Consider the ring homomorphism
\[
  \phi\colon\set K[X,Y]\to\set K(X,Y)[s,t]
\]
which maps each $x_i$ to $sx_i$ and each $y_j$ to~$ty_j$.
The codomain is a bivariate polynomial ring. Therefore, if $P=\phi(p)$ and $\bar I$ is the ideal generated by $P$
in $\set K(X,Y)[s,t]$, we know that the algebra $A(\bar I)$ is simple,
and we can compute a generator $(F,G)\in\set K(X,Y)[s]\times\set K(X,Y)[t]$.
If $A(\bar I)$ is trivial, then so is $A(I)$, because $\phi$ maps any
nontrivial element of $A(I)$ to a nontrivial element of $A(\bar I)$.
Suppose now that $A(\bar I)$ is nontrivial, and let $(F,G)$ be a generator.
As every nonzero $\set K(X,Y)$-multiple of a generator is again a generator,
we may assume that $(F,G)$ is such that $F$ and $G$ have no denominators
and that $F-G$ has no factor in $\set K[X,Y]$. Moreover, if $(F,G)$ is a
generator, then so is $(F+u,G+u)$ for every $u\in\set K(X,Y)$, because
$(1,1)$ is an element of the algebra. We may therefore further assume that
$(F,G)$ is such that $[s^0]F=0$.
We can alternatively assume that $[t^0]G=0$, but we cannot in general
assume that $[s^0]F$ and $[t^0]G$ both are zero. However, we can achieve
this situation by a change of variables, and it will be convenient to
do so. The following lemma provides the justification.

\begin{lemma}\label{lemma:0}
  Let $Q\in\set K^{n+m}$, and
  let $h\colon\set K[X,Y]\to\set K[X,Y]$ be the translation by~$Q$.
  Then $h$ induces an isomorphism of $\mathbb{K}$-algebras between $A(I)$ and $h(A(I))$. In particular, $h(A(I))=A(h(I))$ and a set of generators of $A(h(I)))$ can be obtained from a set of generators of $A(I)$ by applying $h$ to both components of each
  generator.
\end{lemma}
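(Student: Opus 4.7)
The plan is to exploit that $h$ is a $\set K$-algebra automorphism of $\set K[X,Y]$ whose inverse is translation by~$-Q$, and -- crucially -- that it restricts to automorphisms of $\set K[X]$ and of $\set K[Y]$. The latter is immediate because $h(x_i) = x_i + Q_i \in \set K[X]$ and $h(y_j) = y_j + Q_{n+j} \in \set K[Y]$, so no cross-variable terms are introduced; surjectivity of the restrictions follows from the corresponding statement applied to $h^{-1}$. Consequently the componentwise map $\tilde h\colon \set K[X] \times \set K[Y] \to \set K[X] \times \set K[Y]$, $(f,g) \mapsto (h(f),h(g))$, is a $\set K$-algebra automorphism of the product ring.

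Next I would verify the chain of equivalences
\[
(f,g) \in A(I) \iff f - g \in I \iff h(f) - h(g) = h(f-g) \in h(I) \iff (h(f), h(g)) \in A(h(I)),
\]
using that $h(I)$ is an ideal as the image of an ideal under a ring automorphism, and that $(h(f), h(g))$ still lies in $\set K[X] \times \set K[Y]$ by the preceding step. Running the same argument with $h^{-1}$ in place of $h$ gives the reverse direction, so $\tilde h$ restricts to a bijection $A(I) \to A(h(I))$. Because $\tilde h$ is already an algebra automorphism of the ambient $\set K[X] \times \set K[Y]$, this restriction is automatically a $\set K$-algebra isomorphism, and the identity $h(A(I)) = A(h(I))$ is just a reinterpretation of $\tilde h(A(I)) = A(h(I))$. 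The generator claim then follows because algebra isomorphisms send generating sets to generating sets.

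The only point that requires any care is the observation that translation by a constant vector does not mix $x$- with $y$-variables, so that $h(\set K[X]) = \set K[X]$ and $h(\set K[Y]) = \set K[Y]$ -- an affine automorphism with a genuine cross term would fail to induce a map on $\set K[X] \times \set K[Y]$ in the first place. Beyond that, the proof is a direct unravelling of definitions, and no serious obstacle is expected.
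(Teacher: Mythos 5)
Your proposal is correct and follows essentially the same route as the paper: observe that translation by $Q$ is an invertible map preserving $\set K[X]$ and $\set K[Y]$, deduce the equivalence $(f,g)\in A(I)\iff(h(f),h(g))\in A(h(I))$, and read off the isomorphism and the statement about generators. The paper's proof is just a terser version of the same unravelling of definitions, so there is nothing to add.
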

\begin{proof}
  Observe that $h$ maps $\set K[X]$ to $\set K[X]$ and $\set K[Y]$ to $\set K[Y]$,
  and that $h$ is invertible. Therefore,
  \[
  (f,g)\in A(I)\quad\Longleftrightarrow\quad(h(f),h(g))\in A(h(I))
  \]
  for all $f\in\set K[X]$ and all $g\in\set K[Y]$. The claim follows.
\end{proof}

If $Q$ is a point on which $p$ vanishes, then $h(p)$ is a polynomial with no
constant term. According to the lemma, it suffices to compute $A(\<h(p)>)$,
so we may assume without loss of generality that $p(0)=0$. We will then
also have $P(0)=0$, and then $(F-G)|_{s=0,t=0}=0$, so $[s^0]F=[t^0]G$, as
desired. If $\set K$ is not algebraically closed, a point $Q\in\mathbb{K}^{n+m}$ for which $p(Q) = 0$ may not exist. We may have to replace $\set K$
by some algebraic extension $\set K(\alpha)$ in order to ensure the existence
of a suitable~$Q$. By the following lemma, such algebraic extensions of the
coefficient field are harmless.

\begin{lemma}\label{lemma:alpha}
  Let $I\subseteq\set K[X,Y]$, let $\alpha$ be algebraic over~$\set K$,
  and let $J\subseteq\set K(\alpha)[X,Y]$ be the ideal generated by $I$ in $\set K(\alpha)[X,Y]$.
  If $A(J)$ is generated by a single element as $\set K(\alpha)$-algebra,
  then it has a generator with coefficients in~$\set K$, and this generator
  also generates $A(I)$ as $\set K$-algebra.
\end{lemma}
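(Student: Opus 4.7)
The plan is Galois descent. Let $L$ be the Galois closure of $\mathbb{K}(\alpha)$ over $\mathbb{K}$ and let $J_L \subseteq L[X,Y]$ be the ideal generated by $I$. A straightforward coefficient expansion of elements of $L[X]$, $L[Y]$, and $L[X,Y]$ in a basis of $L$ over $\mathbb{K}(\alpha)$ (respectively over $\mathbb{K}$) yields $A(J_L) = L \cdot A(J)$ and $J_L \cap \mathbb{K}[X,Y] = I$, so $(F,G)$ also generates $A(J_L)$ as an $L$-algebra. Because $J_L$ is generated by elements of $\mathbb{K}[X,Y]$, the Galois group $\Gamma := \operatorname{Gal}(L/\mathbb{K})$ acts coefficient-wise on $L[X,Y]$ preserving $J_L$, and therefore also permutes $A(J_L)$.

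For each $\sigma \in \Gamma$ the pair $(\sigma F, \sigma G)$ is again an $L$-generator of $A(J_L) \cong L[t]$, so there must exist $c_\sigma \in L^\times$ and $d_\sigma \in L$ with $\sigma F = c_\sigma F + d_\sigma$ and $\sigma G = c_\sigma G + d_\sigma$. Computing $(\sigma\tau)(F)$ in two ways gives the multiplicative cocycle relation $c_{\sigma\tau} = c_\sigma \sigma(c_\tau)$, and Hilbert 90 produces $c \in L^\times$ with $c_\sigma = \sigma(c)/c$. Substituting $\tilde F := F/c$ and $\tilde G := G/c$ then converts the twist into a purely additive one: $\sigma \tilde F = \tilde F + e_\sigma$ and $\sigma \tilde G = \tilde G + e_\sigma$, where $e_\sigma := d_\sigma/\sigma(c)$ satisfies $e_{\sigma\tau} = e_\sigma + \sigma(e_\tau)$. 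The additive form of Hilbert 90 supplies $e \in L$ with $e_\sigma = \sigma(e) - e$, so $(F^*, G^*) := (\tilde F - e, \tilde G - e)$ is $\Gamma$-invariant and lies in $\mathbb{K}[X] \times \mathbb{K}[Y]$.

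Since $(F^*, G^*)$ is obtained from $(F,G)$ by an invertible $L$-affine substitution, it still generates $A(J_L)$ over $L$, and $F^* - G^* = (F-G)/c$ belongs to $J_L \cap \mathbb{K}[X,Y] = I$, so $(F^*, G^*) \in A(I)$. For any $(f,g) \in A(I) \subseteq A(J_L)$ there is a unique $P \in L[t]$ with $(f,g) = (P(F^*), P(G^*))$; assuming $F^*$ is nonconstant (the degenerate case where $F^*$ and $G^*$ are both constants is handled directly), the powers $(F^*)^k$ are $L$-linearly independent in $L[X]$, so expanding the coefficients of $P$ in a $\mathbb{K}$-basis of $L$ forces $P \in \mathbb{K}[t]$ from $P(F^*) = f \in \mathbb{K}[X]$; this gives $A(I) = \mathbb{K}[(F^*, G^*)]$, and the same argument run with a $\mathbb{K}(\alpha)$-basis of $L$ gives $A(J) = \mathbb{K}(\alpha)[(F^*, G^*)]$. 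The main technical step will be the double Hilbert 90 computation and verifying the cocycle conditions; everything else reduces to routine coefficient-wise linear algebra over $\mathbb{K}$, $\mathbb{K}(\alpha)$, and $L$.
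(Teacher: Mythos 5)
Your proposal is correct, but it takes a genuinely different route from the paper. You descend from the Galois closure $L$ of $\set K(\alpha)/\set K$: after checking $A(J_L)=L\cdot A(J)$ and $J_L\cap\set K[X,Y]=I$ by coefficient expansion, you observe that each $\sigma\in\operatorname{Gal}(L/\set K)$ sends the generator to another generator, hence acts on it by an affine substitution, and you untwist this action using multiplicative and then additive Hilbert~90 to produce a $\Gamma$-invariant (hence $\set K$-rational) generator. The paper never leaves $\set K(\alpha)$ and uses no cohomology: it expands a certificate $f-g=q_1p_1+\cdots+q_\ell p_\ell$ in the power basis $1,\alpha,\dots,\alpha^{d-1}$, notes that each component $(f_i,g_i)$ lies in $A(I)\subseteq A(J)$, and uses the degree comparison (powers of the generator beyond the first have too large degree) to conclude $(f_i,g_i)=u_i(f,g)+v_i(1,1)$; since $(f,g)$ is not a multiple of $(1,1)$, some $(f_i,g_i)$ is already a generator, which is then found \emph{among the components} of the given one. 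The paper's argument is shorter, more elementary, and immediately effective; yours is heavier but buys a cleaner structural statement (any generator differs from a $\set K$-rational one by an invertible affine substitution over $L$), and the final descent of the polynomial $P$ could even be shortened by invoking uniqueness of $P$ plus $\Gamma$-invariance instead of basis expansion. One small point of care: the nonconstancy of $F$ (or $G$) is needed not only in your last step but already to justify the affine relation $\sigma F=c_\sigma F+d_\sigma$ and the extraction of the cocycle identities, so the degenerate case should be split off at the start; it is harmless, since a singly generated $A(J)$ with a generator having both components constant forces the trivial situation, where the claim is immediate.
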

\begin{proof}
  Let $p_1,\dots,p_\ell\in\set K[X,Y]\subseteq\set K(\alpha)[X,Y]$ be ideal
  generators of~$I$ and consider a generator $(f,g)$ of~$A(J)$.
  We may assume that $(f,g)$ is not a $\set K(\alpha)$-multiple of $(1,1)$, because otherwise $A(J)$
  is trivial and there is nothing to show.

  There are $q_1,\dots,q_\ell\in\set K(\alpha)[X,Y]$ such that
  \[
  f-g=q_1p_1+\cdots+q_\ell p_\ell.
  \]
  If $\alpha$ is of degree~$d$, then $1,\alpha,\dots,\alpha^{d-1}$ is a
  $\set K$-vector space basis of~$\set K(\alpha)$.
  Write $f-g=\sum_{i=0}^{d-1}(f_i-g_i)\alpha^i$ for certain $f_i\in\set K[X]$ and $g_i\in\set K[Y]$,
  and write $q_j=\sum_{i=0}^{d-1}q_{i,j}\alpha^i$ for certain $q_{i,j}\in\set K[X,Y]$, so that
  \[
   \sum_{i=0}^{d-1}(f_i-g_i)\alpha^i=\sum_{i=0}^{d-1}(q_{i,1}p_1+\cdots+q_{i,\ell}p_\ell)\alpha^i.
  \]
  Since $p_1,\dots,p_\ell$ are free of~$\alpha$, we can compare coefficients and find that
  $(f_i,g_i)\in A(J)$. As $(f,g)$ is an algebra generator, each $(f_i,g_i)$ can be expressed
  as a polynomial of $(f,g)$ with coefficients in~$\set K(\alpha)$.
  As the degrees of nontrivial powers of $(f,g)$ exceed those of $(f,g)$, and therefore also
  those of $(f_i,g_i)$, we have in fact $(f_i,g_i)=u_i(f,g)+v_i(1,1)$ for certain
  $u_i,v_i\in\set K(\alpha)$. Since $(f,g)$ is not a $\set K(\alpha)$-multiple of $(1,1)$,
  at least one $(f_i,g_i)$ is not a $\set K(\alpha)$-multiple of $(1,1)$, and
  we can write $(f,g)$ as a $\set K(\alpha)$-linear combination of $(1,1)$ and this $(f_i,g_i)$.
  Then $(f_i,g_i)$ is a generator of $A(J)$ with coefficients in~$\set K$.

  By $f_i-g_i=q_{i,1}p_1+\cdots+q_{i,\ell}p_\ell$, we have $(f_i,g_i)\in A(I)$.
  Together with $A(I)\subseteq A(J)$, this implies that $(f_i,g_i)$ is also a generator of~$A(I)$.
\end{proof}

Assuming that $F,G$ are such that $[s^0]F=[t^0]G=0$,
the question is now what a generator $(F,G)$ of $A(\bar I)$ implies
about $A(I)$. Our answer to this question is Theorem~\ref{thm:5}, which
says that if $A(I)$ is nontrivial, then a generator of $A(I)$ can be
obtained from $(F,G)$.
In preparation for the proof of this theorem, we need a few lemmas.

\begin{lemma}\label{lemma:2}
  Let $F\in\set K[X,Y][s]$ be such that $[s^0]F=0$.
  Let the polynomials $u_0,\dots,u_k\in\set K[X,Y]$ be such that
  $u_0+u_1F + \cdots + u_kF^k$ has a factor $p$ in $\set K[X,Y]$.
  Suppose that $p$ is not a common factor of $u_0,\dots,u_k$.
  Then $p\mid F$.
\end{lemma}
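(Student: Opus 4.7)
My plan is to reduce modulo $p$ and run an $s$-adic valuation argument in the quotient $R = \set K[X,Y]/(p)$. Writing $\bar{\cdot}$ for reduction mod $p$, the divisibility hypothesis becomes
\[
\bar u_0 + \bar u_1 \bar F + \cdots + \bar u_k \bar F^k = 0 \quad\text{in } R[s],
\]
and the goal $p \mid F$ is the assertion $\bar F = 0$. I would argue by contradiction.

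Assume $\bar F \neq 0$. Since $[s^0]\bar F = 0$, I can factor $\bar F = s\bar G$ with $\bar G \in R[s]$ nonzero; let $a \geq 0$ be the $s$-order of $\bar G$ and $\bar G_a \in R$ its leading $s$-coefficient. Provided $R$ is an integral domain, $\bar u_i \bar F^i = \bar u_i\, s^i\, \bar G^i$ has $s$-order exactly $i(1+a)$ whenever $\bar u_i \neq 0$, because the corresponding leading $s$-coefficient $\bar u_i \bar G_a^i$ is then nonzero in $R$. These orders $i(1+a)$ are pairwise distinct across $i \in \{0,\dots,k\}$, so in the sum the lowest-order contribution (coming from the smallest $i$ with $\bar u_i \neq 0$) cannot be cancelled by any other term. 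The equation therefore forces every $\bar u_i = 0$, i.e., $p \mid u_i$ for all $i$, contradicting the hypothesis that $p$ is not a common factor.

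The main technical point is the integral domain property of $R$, which corresponds to $p$ being irreducible (prime) in the UFD $\set K[X,Y]$; the valuation bookkeeping relies on $\bar G_a^i \neq 0$ and $\bar u_i \bar G_a^i \neq 0$, which can fail once $R$ has zero divisors. In the application the lemma is presumably used with an irreducible $p$ (or applied to each irreducible factor of $p$ in turn). A more elementary alternative bypasses the quotient ring: induct on $k$, noting that comparison of constant $s$-coefficients immediately gives $p \mid u_0$, after which $p \mid F \cdot (u_1 + u_2 F + \cdots + u_k F^{k-1})$; primality of $p$ then splits this into either $p \mid F$ directly or a smaller instance of the lemma to which the induction hypothesis applies, since $p \mid u_0$ combined with the original hypothesis implies $p$ fails to divide some $u_i$ with $i \geq 1$.
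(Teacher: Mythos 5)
Your argument is correct and is essentially the paper's own proof: reduce modulo $p$, use that $\set K[X,Y]/\<p>$ is a domain and that $[s^0]F=0$ forces $\bar F$ to be a nonzero multiple of $s$, so the powers $1,\bar F,\dots,\bar F^k$ cannot satisfy a nontrivial $R$-linear relation; you phrase this via the $s$-order of the lowest term where the paper uses the $s$-degree of the highest term, which is an immaterial difference. Your caution about irreducibility is well placed: the lemma as literally stated fails for reducible $p$ (take $p=x y$, $F=ys$, $u_0=xy$, $u_1=x$, so $u_0+u_1F=xy(1+s)$ and $p$ is not a common factor of $u_0,u_1$, yet $xy\nmid ys$), and the paper's ``replace $p$ by an irreducible factor'' step only yields divisibility by that factor, not by $p$; however, the lemma is only invoked (in Lemma~\ref{lemma:2a}) with $p$ an irreducible factor of a denominator, so your proof covers exactly the case that is needed, as does the paper's.
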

\begin{proof}
  Without loss of generality, we may assume that $p$ is irreducible.
  (If it isn't, replace $p$ by one of its irreducible factors.)
  We show that the assumption $p\nmid F$ implies that $p$ is a
  common factor of $u_0,\dots,u_k$. Because of $[s^0]F=0$,
  the image of $F$ in $(\set K[X,Y]/\<p>)[s]$ is a polynomial of positive degree.
  Therefore, the images of $1,F,\dots,F^k$ in $(\set K[X,Y]/\<p>)[s]$ are
  linearly independent over $\set K[X,Y]/\<p>$.
  As the image of $u_0+u_1F + \cdots + u_kF^k$ in $(\set K[X,Y]/\<p>)[s]$
  is assumed to be zero, the images of $u_0,\dots,u_k$ must be zero,
  which means $p\mid u_i$ for all~$i$, as promised.
\end{proof}

\begin{lemma}\label{lemma:2a}
  Let $(F,G)\in\set K[X,Y][s]\times\set K[X,Y][t]$ be such that $[s^0]F=[t^0]G=0$.
  Suppose that $F-G$ has no factor in $\set K[X,Y]$.
  Let $u_0,\dots,u_k\in\set K(X,Y)$ be such that
  \[
  u_0\binom11+u_1\binom FG + \cdots + u_k\binom{F^k}{G^k}
  \in\set K[X,Y][s]\times\set K[X,Y][t].
  \]
  Then $u_0,\dots,u_k$ are in fact in $\set K[X,Y]$.
\end{lemma}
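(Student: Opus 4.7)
The plan is to clear denominators and then apply Lemma~\ref{lemma:2} separately to the $F$-side and the $G$-side. I would write $u_i=v_i/d$ with $v_0,\dots,v_k,d\in\set K[X,Y]$ chosen so that $\gcd(d,v_0,\dots,v_k)=1$. Multiplying through by $d$ turns the hypothesis into the two divisibility statements
\[
 d\mid v_0+v_1F+\cdots+v_kF^k\quad\text{in }\set K[X,Y][s], \qquad d\mid v_0+v_1G+\cdots+v_kG^k\quad\text{in }\set K[X,Y][t],
\]
where for the second statement I implicitly use that Lemma~\ref{lemma:2} holds verbatim with $t$ and $G$ in place of $s$ and $F$; its proof never uses anything special about the variable. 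The goal is then to conclude that $d$ is a unit, which forces $u_i\in\set K[X,Y]$.

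To this end I would argue by contradiction: suppose $d$ has an irreducible factor $p\in\set K[X,Y]$. The key observation is that $p$ cannot divide both $F$ and $G$, for otherwise $p\mid F-G$, contradicting the hypothesis that $F-G$ has no factor in $\set K[X,Y]$. So, without loss of generality, $p\nmid F$. Now Lemma~\ref{lemma:2} applies on the $F$-side: since $p$ divides $v_0+v_1F+\cdots+v_kF^k$ but does not divide $F$, the lemma forces $p$ to be a common factor of $v_0,\dots,v_k$. This contradicts $\gcd(d,v_0,\dots,v_k)=1$, so no such $p$ exists and $d$ is a unit.

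The main obstacle, as is typical when descending from $\set K(X,Y)$ to $\set K[X,Y]$, is the potential asymmetry between the two sides: Lemma~\ref{lemma:2} by itself only controls one factor of the denominator at a time, and it does so only when that factor is not already absorbed into $F$. It is precisely the hypothesis on $F-G$ that removes this obstacle, because it guarantees that every irreducible factor of the denominator fails to divide at least one of $F$, $G$, so that a single application of the lemma on whichever side works always suffices. No induction or careful book-keeping of multiplicities is needed.
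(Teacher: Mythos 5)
Your proof is correct and is essentially the paper's own argument: both clear the least common denominator, pick an irreducible factor $p$ of it, invoke Lemma~\ref{lemma:2}, and use the hypothesis that $F-G$ has no factor in $\set K[X,Y]$. The only (immaterial) difference is the order of the contradiction — the paper applies Lemma~\ref{lemma:2} on both sides to get $p\mid F$ and $p\mid G$ and then contradicts the assumption on $F-G$, while you use that assumption first to get $p\nmid F$ (say) and then contradict the coprimality of the numerators with the denominator.
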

\begin{proof}
  Suppose otherwise and let $d\in\set K[X,Y]$ be the least common denominator
  of $u_0,\dots,u_k$ and $p$ be an irreducible factor of~$d$.
  Then
  \[
    p\mid du_0+du_1F+\cdots+du_kF^k
  \]
  and
  \[
    p\mid du_0+du_1G+\cdots+du_kG^k
  \]
  and $p\nmid du_i$ for at least one~$i$.
  By the previous lemma, this implies $p\mid F$ and $p\mid G$.
  But then $p\mid F-G$, in contradiction to the assumption.
\end{proof}

\begin{lemma}\label{lemma:1}
  Let $F\in\set K[X,Y][s]$ be such that $[s^0]F=0$.
  Let $k$ be a positive integer.
  Suppose that $[s^i]F^k$ is in $\set K[X]$ for every $i>(k-1)\deg_sF$.
  Then $F\in\set K[X][s]$.
\end{lemma}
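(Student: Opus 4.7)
Let $d=\deg_s F$ and write $F=\sum_{j=1}^{d} f_j s^j$ with $f_1,\dots,f_d\in\set K[X,Y]$ and $f_d\neq 0$ (the sum starts at $j=1$ because $[s^0]F=0$). The plan is to prove by induction on $j\in\{0,1,\dots,d-1\}$ that $f_{d-j}\in\set K[X]$.

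For the base case $j=0$, I would use that $[s^{kd}]F^k=f_d^k$ is by assumption in $\set K[X]$ (note $kd>(k-1)d$ since $d\geq 1$). Viewing $f_d$ as an element of $\set K[X][Y]$, if it had positive degree in some $y_i$ then so would $f_d^k$ (using that $\set K[X]$ is an integral domain and $\operatorname{char}\set K=0$), contradicting $f_d^k\in\set K[X]$. Hence $f_d\in\set K[X]$.

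For the inductive step, assume $f_d,f_{d-1},\dots,f_{d-j+1}\in\set K[X]$ for some $j$ with $1\le j\le d-1$. Expanding $F^k$ as a sum over multi-indices $(l_1,\dots,l_k)\in\{1,\dots,d\}^k$, one finds
\[
[s^{kd-j}]F^k \;=\; \sum_{l_1+\cdots+l_k=kd-j} f_{l_1}\cdots f_{l_k}.
\]
The constraint $\sum(d-l_i)=j$ with each $d-l_i\geq0$ forces each $l_i\geq d-j$; moreover, any multi-index with two or more $l_i<d$ must have all those indices in $\{d-j+1,\dots,d-1\}$, since two deficits of $j$ would already exceed $j$. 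So the only multi-indices with an $l_i$ equal to $d-j$ are the $k$ multi-indices in which one $l_i=d-j$ and the remaining $k-1$ equal $d$. Collecting,
\[
[s^{kd-j}]F^k \;=\; k\,f_d^{k-1}f_{d-j} + R_j
\]
where $R_j$ is a polynomial expression in $f_d,\dots,f_{d-j+1}$, hence lies in $\set K[X]$ by the induction hypothesis. Since $kd-j>(k-1)d$, the left-hand side also lies in $\set K[X]$, so $k\,f_d^{k-1}f_{d-j}\in\set K[X]$.

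To finish, I note $f_d\neq 0$ and $k\neq 0$ (characteristic zero), so $k f_d^{k-1}\in\set K[X]\setminus\{0\}$, and in the field $\set K(X)$ we get $f_{d-j}\in\set K(X)$. But $f_{d-j}\in\set K[X,Y]$ as well, and $\set K(X)\cap\set K[X,Y]=\set K[X]$ (view any element of the intersection as $A/B$ with $A,B\in\set K[X]$ coprime; then $Bf_{d-j}=A$ in $\set K[X,Y]$, whence $B\mid A$ in $\set K[X]$, forcing $B$ to be a unit). Therefore $f_{d-j}\in\set K[X]$, completing the induction and showing $F\in\set K[X][s]$. The main obstacle is really just the combinatorial bookkeeping in the inductive step—identifying the unique ``new'' term $k f_d^{k-1}f_{d-j}$ and checking that all other contributions fall under the inductive hypothesis; the final descent from $\set K(X)\cap\set K[X,Y]$ to $\set K[X]$ is a standard UFD argument.
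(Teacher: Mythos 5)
Your proof is correct and follows essentially the same route as the paper: the top coefficient $[s^{kd}]F^k=f_d^k$ gives $f_d\in\set K[X]$, and then a downward induction using the multinomial expansion $[s^{kd-j}]F^k=k f_d^{k-1}f_{d-j}+(\text{polynomial in }f_{d-j+1},\dots,f_d)$ yields the remaining coefficients. You merely spell out details the paper leaves implicit, namely the combinatorial identification of the term $k f_d^{k-1}f_{d-j}$ and the final descent from $\set K(X)\cap\set K[X,Y]$ to $\set K[X]$.
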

\begin{proof}
  Write $F=c_1s+\cdots+c_ds^d$ with $d=\deg_sF$ and $c_1,\dots,c_d\in\set K[X,Y]$.
  We have $[s^{dk}]F^k=c_d^k$, which can only be in $\set K[X]$ if $c_d$ is.
  For $i=1,\dots,d-1$, the coefficient of $s^{dk-i}$ in $F^k$ is
  \[
    k c_d^{k-1}c_{d-i}+p(c_{d-i+1},c_{d-i+2},\dots,c_d)
  \]
  for a certain polynomial~$p$. This follows from the multinomial theorem.
  By induction on~$i$, it implies that also $c_1,c_2,\dots,c_{d-1}$ belong
  to $\set K[X]$, as claimed.
\end{proof}

\begin{lemma}\label{lemma:1a}
  Let $(F,G)\in\set K[X,Y][s]\times\set K[X,Y][t]$ be such that $[s^0]F=[t^0]G=0$.
  Suppose that $F-G$ has no factor in $\set K[X,Y]$.
  Let $u_0,\dots,u_k\in\set K[X,Y]$ be such that
  \[
    u_0\binom11+u_1\binom FG+\cdots+u_k\binom{F^k}{G^k}\in\set K[X][s]\times\set K[Y][t].
  \]
  Then $F\in\set K[X][s]$, $G\in\set K[Y][s]$, and $u_0,\dots,u_k\in\set K$.
\end{lemma}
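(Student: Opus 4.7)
My plan is to proceed by induction on $k$. The base case $k=0$ is immediate: the hypothesis forces $u_0\in\set K[X]\cap\set K[Y]=\set K$, and the claims about $F,G$ are vacuous. For the inductive step, assume $k\ge 1$, and assume without loss of generality that $u_k\ne 0$ (else drop to exponent $k-1$) and that $F,G$ are both nonzero. The strategy is to peel off the top term $u_kF^k$ (and $u_kG^k$) after first showing that $u_k$ is a scalar, then invoke the induction hypothesis.

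The key input is the following observation: for $a,b\in\set K[X,Y]\setminus\{0\}$, if $ab\in\set K[X]$ then $a,b\in\set K[X]$. Indeed, viewing $\set K[X,Y]$ as $\set K[X][Y]$, the total $Y$-degree is additive under multiplication (since $\set K[X]$ is a domain), so $\deg_Y a+\deg_Y b=\deg_Y(ab)=0$ forces both summands to vanish. I apply this to the leading $s$-coefficient of the first sum: with $d:=\deg_s F$ and $c_d:=[s^d]F$, only the $u_kF^k$ term contributes to $[s^{kd}]$, yielding
\[
  u_k c_d^k=[s^{kd}]\bigl(u_0+u_1 F+\cdots+u_k F^k\bigr)\in\set K[X],
\]
so $u_k\in\set K[X]$ and $c_d\in\set K[X]$. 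The symmetric argument on the $t$-side gives $u_k\in\set K[Y]$ and $[t^e]G\in\set K[Y]$ for $e:=\deg_t G$, hence $u_k\in\set K[X]\cap\set K[Y]=\set K$, a nonzero scalar.

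Once $u_k$ is a nonzero element of $\set K$, Lemma~\ref{lemma:1} finishes things quickly. For every $i>(k-1)d$, the terms $u_j F^j$ with $j<k$ have $s$-degree at most $(k-1)d<i$ and contribute nothing, so $[s^i]F^k=u_k^{-1}[s^i](u_0+u_1F+\cdots+u_kF^k)\in\set K[X]$; Lemma~\ref{lemma:1} then gives $F\in\set K[X][s]$, and symmetrically $G\in\set K[Y][t]$. Because $u_kF^k$ is now itself in $\set K[X][s]$, subtracting it leaves $u_0+u_1F+\cdots+u_{k-1}F^{k-1}\in\set K[X][s]$ and analogously on the $G$-side, so the induction hypothesis applied at $k-1$ yields $u_0,\dots,u_{k-1}\in\set K$. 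The step requiring the most care is the $Y$-degree argument pinning $u_k$ to $\set K$; after that the rest is bookkeeping. It is slightly curious that the hypothesis that $F-G$ has no factor in $\set K[X,Y]$ does not appear to be used in this argument, presumably because it was already expended in Lemma~\ref{lemma:2a} to ensure the $u_i$ are polynomial in the first place.
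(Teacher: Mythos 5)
Your proof is correct and follows essentially the same route as the paper's: extract the top $s$-coefficients (to which only $u_kF^k$ contributes), use additivity of $Y$-degree in the domain $\set K[X,Y]$ to conclude the relevant factors are free of $Y$, invoke Lemma~\ref{lemma:1} to get $F\in\set K[X][s]$ (symmetrically $G\in\set K[Y][t]$), and then force the $u_i$ into $\set K$. The only difference is organizational: the paper deduces all $u_i\in\set K$ at once from the pairwise distinct $s$-degrees of the powers of $F$ instead of peeling off the top term by induction on $k$, and, as you observed, it likewise never uses the hypothesis that $F-G$ has no factor in $\set K[X,Y]$.
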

\begin{proof}
  The $\deg_sF$ highest order terms of $F^k$ (w.r.t.~$s$) exceed the highest order terms of the lower powers of~$F$. (Note that the $u_0,\dots,u_k$ do not contain~$s$.)
  Since $u_0+u_1F+\cdots+u_kF^k$ belongs to $\set K[X][s]$ by assumption,
  neither $u_k$ nor the coefficients of the $\deg_sF$ highest order terms of $F^k$ can contain~$Y$.
  Therefore, by Lemma~\ref{lemma:1}, $F$ belongs to $\set K[X][s]$.

  As the $s$-degrees of the powers of $F$ are pairwise distinct, it follows furthermore
  that none of the $u_0,\dots,u_k$ can contain any~$Y$.

  By the same reasoning, we get that $G$ belongs to $\set K[Y][t]$ and that none of
  the $u_0,\dots,u_k$ can contain any~$X$, so in fact, we have $u_0,\dots,u_k\in\set K$.
\end{proof}

\begin{thm}\label{thm:5}
  Let $p\in\set K[X,Y]\setminus(\set K[X]\cup\set K[Y])$ be such that $p(0)=0$.
  Let $I=\<p>$, $P=\phi(p)$, and $\bar I=\<P>\subseteq\set K(X,Y)[s,t]$.
  Suppose that $A(\bar I)$ is not trivial and let $(F,G)\in\set K(X,Y)[s]\times\set K(X,Y)[t]$ be a generator
  such that $F$ and $G$ have no denominator, $F-G$ has no factor in $\set K[X,Y]$,
  and $F|_{s=0}=G|_{t=0}=0$.
  Then $A(I)$ is nontrivial if and only if $F\in\set K[X][s]$ and $G\in\set K[Y][t]$
  and $F|_{s=1}\neq G|_{t=1}$.
  In this case, $(F|_{s=1},G|_{t=1})$ is a generator of $A(I)$.
\end{thm}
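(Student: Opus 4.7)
The plan is to use the ring homomorphism~$\phi$ to shuttle between $A(I)$ and $A(\bar I)$: Lemmas~\ref{lemma:1a} and~\ref{lemma:2a} drive the forward direction, and Gauss's lemma drives the reverse. Write $f^*:=F|_{s=1}$ and $g^*:=G|_{t=1}$.

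For the forward direction, I would take a non-trivial $(f,g)\in A(I)$ and push it to $(\phi(f),\phi(g))\in A(\bar I)$. Since $(F,G)$ generates $A(\bar I)$, write this as $\sum_i u_i(F^i,G^i)$ with $u_i\in\set K(X,Y)$. Lemma~\ref{lemma:2a} (using that $F-G$ has no factor in $\set K[X,Y]$ and that $\phi(f),\phi(g)$ have no denominators) gives $u_i\in\set K[X,Y]$, and Lemma~\ref{lemma:1a} (using $\phi(f)\in\set K[X][s]$ and $\phi(g)\in\set K[Y][t]$) then forces $F\in\set K[X][s]$, $G\in\set K[Y][t]$, and $u_i\in\set K$. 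Substituting $s=t=1$ converts the identity into $(f,g)=\sum_i u_i((f^*)^i,(g^*)^i)$, and the non-triviality of $(f,g)$ forces $f^*\neq g^*$. The same representation expresses every element of $A(I)$ as a $\set K$-polynomial in $(f^*,g^*)$, so once $(f^*,g^*)\in A(I)$ is established, it is automatically a generator of~$A(I)$.

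Both the reverse implication and the generator statement now reduce to showing $(f^*,g^*)\in A(I)$, i.e.~that $p\mid f^*-g^*$ in $\set K[X,Y]$. Starting from $F-G=QP$ in $\set K(X,Y)[s,t]$, I plan to argue that $P$ is primitive as a polynomial in $s,t$ over $\set K[X,Y]$; Gauss's lemma, combined with the primitivity of $F-G$, then places $Q$ in $\set K[X,Y][s,t]$, and substituting $s=t=1$ yields $f^*-g^*=Q(X,Y,1,1)\cdot p$ with $Q(X,Y,1,1)\in\set K[X,Y]$, as desired.

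The main obstacle is the primitivity of~$P$. Because $[s^a t^b]P$ is bihomogeneous of bidegree~$(a,b)$, the content $c\in\set K[X,Y]$ of $P$ is itself bihomogeneous, of some bidegree $(a_0,b_0)$. Suppose for contradiction that $c\notin\set K$ and, without loss of generality by the $(s,X)\leftrightarrow(t,Y)$ symmetry, that $a_0\geq1$. The identity $\phi(c)=s^{a_0}t^{b_0}c$ yields $\tilde P:=P/c=s^{a_0}t^{b_0}\phi(p/c)$, and Gauss applied to the primitive $\tilde P$ produces $F-G=Q_0\tilde P$ with $Q_0\in\set K[X,Y][s,t]$. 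Substituting $s=0$ and using $F|_{s=0}=0$ forces $G=0$, so $F=Q_0 s^{a_0}t^{b_0}\phi(p/c)$. But the right-hand side has positive $t$-degree---from $t^{b_0}$ if $b_0\geq1$, or from $\phi(p/c)$ if $b_0=0$ (in that sub-case $c\in\set K[X]$, and then $p/c\notin\set K[X]$ because $p\notin\set K[X]$)---while $F$ has no~$t$; a $t$-degree comparison therefore forces $Q_0=0$, hence $F=0$ and $(F,G)=(0,0)$, contradicting the non-triviality of $A(\bar I)$.
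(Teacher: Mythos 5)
Your proof is correct and follows essentially the same route as the paper: the forward direction is the paper's argument verbatim (push $(f,g)$ through $\phi$, apply Lemmas~\ref{lemma:2a} and~\ref{lemma:1a}, set $s=t=1$, and note that $F|_{s=1}=G|_{t=1}$ would force $(f,g)$ to be a multiple of $(1,1)$), and the reverse direction is the same Gauss's lemma computation yielding $f^*-g^*=Q(X,Y,1,1)\,p$. The one substantive difference is that the paper's ``$\Leftarrow$'' simply asserts that $P$ has no factor in $\set K[X,Y]$, whereas you actually derive this from the hypotheses: the content of $P$ is bihomogeneous, and if it were nonconstant, Gauss's lemma plus the normalization $F|_{s=0}=G|_{t=0}=0$ and a $t$-degree (resp.\ $s$-degree) comparison would force $(F,G)=(0,0)$, contradicting the nontriviality of $A(\bar I)$. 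This is a worthwhile addition rather than a detour, since $\phi(p)$ need not be primitive for arbitrary $p$ (e.g.\ $p=x_1(x_2+y_1)$), so the paper's assertion does require exactly the kind of justification you supply.
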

\begin{proof}
``$\Leftarrow$'':
If $F$ and $G$ are as in the assumption, then $F-G$ is a $\set K(X,Y)[s,t]$-multiple of~$P$, say
$F-G=QP$ for some $Q\in\set K(X,Y)[s,t]$.
Since $P$ has no factor in $\set K[X,Y]$ and $F-G$ has no denominator, it follows that $Q$ has no denominator.
Therefore, setting $s=1$ and $t=1$ shows that $F|_{s=1}-G|_{t=1}$ is a separated multiple of $p$ and
therefore an element of~$I$. It follows that $A(I)$ contains $(F|_{s=1},G|_{t=1})$. It remains to
show that this is not a $\set K$-multiple of $(1,1)$. If it were, then $F|_{s=1}-G|_{t=1}=0$,
which is excluded by assumption on $F$ and~$G$.

 ``$\Rightarrow$'': If $A(I)$ is nontrivial,
it contains some pair $(f,g)\in\set K[X]\times\set K[Y]$ that is not a $\set K$-multiple of $(1,1)$.
Then $(\phi(f),\phi(g))$ is a nontrivial element of $A(\bar I)$.
Then there are $u_0,\dots,u_k\in\set K(X,Y)$ such that
\[
\binom{\phi(f)}{\phi(g)}=u_0\binom11+u_1\binom FG+\cdots+u_k\binom{F^k}{G^k}.
\]
The left hand side has no denominator in $\set K[X,Y]$, because $f$ and $g$ are polynomials.
Therefore, by Lemma~\ref{lemma:2a}, $u_0,\dots,u_k$ belong to $\set K[X,Y]$.
Next, by Lemma~\ref{lemma:1a}, it follows that $F\in\set K[X][s]$, $G\in\set K[Y][t]$,
and $u_0,\dots,u_k\in\set K$.

It remains to show that $F|_{s=1}\neq G|_{t=1}$. If they were equal, then they would
be in~$\set K$, because $F|_{s=1}$ does not contain $Y$ and $G|_{t=1}$ does not contain~$X$.
Then $(F|_{s=1},G|_{t=1})$ would be a $\set K$-multiple of $(1,1)$, and
\[
u_0\binom11+u_1\binom{F|_{s=1}}{G|_{t=1}}+\cdots+u_k\binom{(F|_{s=1})^k}{(G|_{t=1})^k}
\]
would also be a $\set K$-multiple of $(1,1)$. This is impossible, because $(f,g)$ is assumed
not to be a $\set K$-multiple of $(1,1)$.

This completes the argument for the direction ``$\Rightarrow$''.
In this argument, we have shown that every element of $A(I)$ can be written as a polynomial
in $(F|_{s=1},G|_{t=1})$.
This construction also implies the additional claim about the generator of $A(I)$.
\end{proof}

\begin{example}
  \begin{enumerate}
  \item If $I$ is generated by $x_1^2+2x_1x_2+x_2^2+x_1y+x_2y+y^2$, then both $A(I)$ and $A(\bar I)$
    are nontrivial.
    They are generated by $((x_1+x_2)^3, y^3)$ and $((x_1+x_2)^3s^3,y^3t^3)$, respectively.
  \item If $I$ is generated by $x_1^2+x_1x_2+x_2^2+x_1y+x_2+y^2$, then $A(I)$ and $A(\bar I)$ both
    are trivial.
  \end{enumerate}
\end{example}

There is no example where $A(\bar I)$ is trivial but $A(I)$ is not, because $\phi$ maps nontrivial elements
of $A(I)$ to nontrivial elements of~$A(\bar I)$.
Conversely, we have also not found any example of a principal ideal $I$ where $A(I)$ is trivial but $A(\bar I)$ is not, and we
suspect that no such example exists. However, as we will see in Example~\ref{ex:final}, there are such examples when $I$ is not principal.

\begin{corollary}\label{cor:simpleness}
  For every $p\in\set K[X,Y]\setminus(\set K[X]\cup\set K[Y])$, the algebra $A(\<p>)$
  is simple.
\end{corollary}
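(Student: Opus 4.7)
The plan is to reduce the general statement to the setting of Theorem~\ref{thm:5} by first enlarging the coefficient field until $p$ has a zero, and then translating so that this zero lies at the origin. Throughout, I call $A(I)$ \emph{simple} when it is generated as a $\set K$-algebra by a single element in addition to the unit $(1,1)$, and I count the trivial algebra $\set K\cdot(1,1)$ as simple as well.

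By Hilbert's Nullstellensatz, $p$ has a zero $Q$ in some finite algebraic extension $\set K(\alpha)$ of $\set K$. Let $J\subseteq\set K(\alpha)[X,Y]$ be the ideal generated by~$p$. Lemma~\ref{lemma:alpha} reduces the problem to showing that $A(J)$ is simple as a $\set K(\alpha)$-algebra, because the lemma then converts a single-element generator of $A(J)$ into a single-element generator of $A(\<p>)$ with coefficients in~$\set K$.

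Let $h$ denote the translation by $-Q$, so that $h(p)(0)=0$. By Lemma~\ref{lemma:0}, $A(J)$ is isomorphic to $A(\<h(p)>)$ and generators are transported along the isomorphism, so it suffices to show that $A(\<h(p)>)$ is simple. Since $h$ does not mix $X$ with $Y$, we still have $h(p)\in\set K(\alpha)[X,Y]\setminus(\set K(\alpha)[X]\cup\set K(\alpha)[Y])$. Setting $P=\phi(h(p))$ and $\bar I=\<P>\subseteq\set K(\alpha)(X,Y)[s,t]$, the bivariate result of~\cite{buchacher20} yields that $A(\bar I)$ is simple. If $A(\bar I)$ is trivial, then so is $A(\<h(p)>)$, since $\phi$ sends every nontrivial element of the latter to a nontrivial element of the former, and we are done. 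Otherwise, pick a generator $(F,G)$ of $A(\bar I)$ and normalize it as in the paragraph preceding Theorem~\ref{thm:5}: clear denominators, divide out any common factor of $F-G$ in $\set K(\alpha)[X,Y]$, and use the identity $P(0)=0$ together with the freedom $(F,G)\mapsto(F+u,G+u)$ to arrange $F|_{s=0}=G|_{t=0}=0$. Theorem~\ref{thm:5} then certifies that $A(\<h(p)>)$ is either trivial or generated by the single element $(F|_{s=1},G|_{t=1})$.

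The main work is done by Theorem~\ref{thm:5}, and the only obstacle worth flagging is the legitimacy of the normalization of $(F,G)$: each normalization step multiplies $(F,G)$ by a unit of $\set K(\alpha)(X,Y)$ or adds a $\set K(\alpha)(X,Y)$-multiple of $(1,1)$, so the normalized pair remains an algebra generator of $A(\bar I)$, and the hypothesis $P(0)=0$ is precisely what allows both $F|_{s=0}$ and $G|_{t=0}$ to be made zero simultaneously.
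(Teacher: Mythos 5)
Your proposal is correct and follows essentially the same route as the paper's own proof: reduce to the case $p(0)=0$ via Lemmas~\ref{lemma:alpha} and~\ref{lemma:0}, dispose of the trivial case, normalize a generator of $A(\bar I)$ by an affine change $\alpha(F,G)+\beta(1,1)$ so that the hypotheses of Theorem~\ref{thm:5} hold (using $P(0)=0$ to make $F|_{s=0}=G|_{t=0}=0$ simultaneously), and then invoke the theorem. Your explicit remarks on the Nullstellensatz-type field extension and on why the normalization preserves the generator property only spell out what the paper asserts with ``by a suitable choice of $\alpha$ and $\beta$,'' so there is no substantive difference.
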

\begin{proof}
  We argue that all assumptions in Thm.~\ref{thm:5} are ``without loss of generality.''
  First, by Lemmas~\ref{lemma:0} and \ref{lemma:alpha}, we can assume that $p(0)=0$.
  If $A(\<p>)$ is trivial, there is nothing to prove.
  If $A(\<p>)$ is not trivial, then so is $A(\<P>)$.
  If $(F,G)$ is any generator of $A(\<P>)$, then so is
  every $\alpha(F,G)+\beta(1,1)$ for any choice $\alpha\in\set K(X,Y)\setminus\{0\}$ and $\beta\in\set K(X,Y)$.
  By a suitable choice of $\alpha$ and $\beta$, we can meet the assumptions imposed
  on $(F,G)$ in Thm.~\ref{thm:5}.
  According to the theorem, then $(F|_{s=1},G|_{t=1})$ is a generator of~$A(I)$.
\end{proof}

The assumption that $p$ does not belong to $\set K[X]$ or to $\set K[Y]$ is necessary.
For example, if $p\in\set K[X]$, the algebra $A(I)$ consists of all $(f+c,c)$ where $f\in\mathbb{K}[X]\cdot p$ and $c\in\mathbb{K}$,
and while this is a concise description of~$A(I)$, such an algebra need not be finitely
generated. To see this, consider $p=x_1x_2\in\set K[x_1,x_2]$. The $x_2$-degree of any
nontrivial power of any nontrivial $\set K[x_1,x_2]$-multiple of $p$ will be at least~$2$,
so every element $x_1^kx_2$ of the algebra can only be a $\set K$-linear combination of
generators. Because of $\dim_{\set K}x_1x_2\set K[x_1]=\infty$, there must be infinitely
many generators.

We have just seen that the algebra $A(I)$ is simple whenever the ideal $I$ is generated by a polynomial $p$ of $\mathbb{K}[X,Y]$ that is not an element of $\mathbb{K}[X] \cup \mathbb{K}[Y]$. We now give a characterization of the generator of $A(I)$ in terms of certain divisibility relations. It is based on the following generalization of a theorem by Fried and MacRae~\cite{fried69}. For a proof we refer to~\cite{schicho95}. See also~\cite{aichinger11}.
\begin{theorem}\label{thm:fried}
Let $f,F\in\mathbb{K}[X]$ and $g,G\in\mathbb{K}[Y]$ be non-constant polynomials. The following are equivalent:
\begin{enumerate}
\item There exists $h\in\mathbb{K}[t]$ such that $F = h(f)$ and $G = h(g)$.
\item $f-g$ divides $F-G$ in $\mathbb{K}[X,Y]$.
\end{enumerate}
\end{theorem}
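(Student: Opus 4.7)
For direction (1) $\Rightarrow$ (2), observe that the universal identity $(u-v) \mid (h(u) - h(v))$ in $\mathbb{K}[u,v]$, specialised at $u \mapsto f(X)$ and $v \mapsto g(Y)$, yields $(f-g) \mid (F-G)$ in $\mathbb{K}[X,Y]$.

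For the converse (2) $\Rightarrow$ (1), the plan is first to construct $h$ as a function on $\bar{\mathbb{K}}$ and then to show that this function is a polynomial. Since $g$ is non-constant, the map $g \colon \bar{\mathbb{K}}^m \to \bar{\mathbb{K}}$ is surjective by the Nullstellensatz. For each $\alpha \in \bar{\mathbb{K}}$, pick $\eta \in \bar{\mathbb{K}}^m$ with $g(\eta) = \alpha$ and set $h(\alpha) := G(\eta)$. Substituting $Y = \eta$ into the identity $F - G = (f-g) \cdot q(X,Y)$ provided by the hypothesis shows that $(f(X) - \alpha) \mid (F(X) - G(\eta))$ in $\bar{\mathbb{K}}[X]$, hence $F(\xi) = G(\eta)$ for every $\xi \in \bar{\mathbb{K}}^n$ with $f(\xi) = \alpha$. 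Using the analogous surjectivity of $f$, a short check confirms that $h(\alpha)$ does not depend on the choice of $\eta$, that $F(\xi) = h(f(\xi))$ for all $\xi \in \bar{\mathbb{K}}^n$, and that $G(\eta) = h(g(\eta))$ for all $\eta \in \bar{\mathbb{K}}^m$.

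The hard part is showing that $h$ is polynomial, and the plan is to reduce to the univariate case. Pick generic $a_1, \ldots, a_{n-1} \in \mathbb{K}$, so that after possibly relabelling variables, $f^*(x_n) := f(a_1, \ldots, a_{n-1}, x_n) \in \mathbb{K}[x_n]$ is non-constant of some degree $d \geq 1$; the relation $F^*(x_n) := F(a_1, \ldots, a_{n-1}, x_n) = h(f^*(x_n))$ as functions on $\bar{\mathbb{K}}$ is inherited. Expand $F^* = \sum_{i=0}^{k} r_i(x_n) (f^*(x_n))^i$ by iterated Euclidean division in $\mathbb{K}[x_n]$ with each $\deg r_i < d$. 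Evaluating at the $d$ distinct $\bar{\mathbb{K}}$-roots of $f^* - \alpha$ for a generic $\alpha$, the polynomial $\sum_i r_i(x_n) \alpha^i - h(\alpha)$ has degree $< d$ in $x_n$ and vanishes at $d$ distinct points, hence is identically zero. Letting $\alpha$ vary over infinitely many values forces each $r_i \in \mathbb{K}[x_n]$ to be a constant in $\mathbb{K}$, so $h(t) := \sum_i r_i t^i$ lies in $\mathbb{K}[t]$, and $F = h(f)$ holds as polynomials since it holds at every $\bar{\mathbb{K}}$-point.

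Finally, combining $F = h(f)$ with the hypothesis $(f-g) \mid (F-G)$ gives $(f-g) \mid (h(f) - G)$, while the already-proved direction (1) $\Rightarrow$ (2) gives $(f-g) \mid (h(f) - h(g))$; subtracting yields $(f-g) \mid (h(g) - G)$. Since $h(g) - G \in \mathbb{K}[Y]$ while $f - g$ genuinely involves the $X$-variables, this forces $h(g) = G$.
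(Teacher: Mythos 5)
Your argument is correct, but be aware that the paper does not prove Theorem~\ref{thm:fried} at all: it states it as a known generalization of the Fried--MacRae theorem and defers the proof to \cite{schicho95} (see also \cite{aichinger11}). So there is no internal proof to match; what you have written is a self-contained elementary replacement. Your route --- (1)$\Rightarrow$(2) via the universal divisibility $(u-v)\mid(h(u)-h(v))$, and (2)$\Rightarrow$(1) by first defining $h$ set-theoretically on $\overline{\mathbb{K}}$ through the fibres of $f$ and $g$ (using surjectivity from the Nullstellensatz and the specialization $Y=\eta$ of $F-G=(f-g)q$), then proving polynomiality by specializing to a single variable and reading off $h$ from the $f^*$-adic expansion of $F^*$, and finally recovering $G=h(g)$ by an $X$-degree comparison --- works directly for multivariate $f\in\mathbb{K}[X]$, $g\in\mathbb{K}[Y]$, which is exactly the generality the paper needs, and it uses nothing beyond the Nullstellensatz, Euclidean division and the infinitude of $\mathbb{K}$; the cited proofs obtain the statement by different, less hands-on means. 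The final step is sound: if $h(g)-G\neq0$ then comparing total $X$-degrees in $h(g)-G=(f-g)q$ over the domain $\mathbb{K}[Y]$ gives a contradiction since $\deg_X f\geq1$.

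One step is stated too quickly: ``$F=h(f)$ holds as polynomials since it holds at every $\overline{\mathbb{K}}$-point.'' At that stage you only know $F(\xi)=h(f(\xi))$ for the set-theoretic $h$, and that this $h$ agrees with the constructed polynomial $\sum_i r_i t^i$ at the non-critical values $\alpha$ of $f^*$; agreement at every point is not yet available. The gap closes in one line: since the $f^*$-adic expansion is exact and the $r_i$ are constants, $F^*=\sum_i r_i (f^*)^i$ holds identically in $\mathbb{K}[x_n]$, and $f^*\colon\overline{\mathbb{K}}\to\overline{\mathbb{K}}$ is onto, so the function $h$ coincides with the polynomial everywhere; alternatively, $F-\sum_i r_i f^i$ vanishes off finitely many hypersurfaces $f=\alpha$ and hence vanishes identically. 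Either remark should be added, after which the proof is complete.
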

Let $F-G \in I \cap \left( \mathbb{K}[X] + \mathbb{K}[Y]\right)$ such that $(F,G)\in A(I)$. If $A(I)$ is simple and generated by $(f,g)\in\mathbb{K}[X]\times\mathbb{K}[Y]$, then $(F,G) = (h(f),h(g))$ for some $h\in\mathbb{K}[t]$. The previous theorem implies that $f-g$ divides $F-G$ in $\mathbb{K}[X,Y]$. As a consequence of Corollary~\ref{cor:simpleness} and Theorem~\ref{thm:fried} we therefore have the following.
\begin{corollary}
Let $p\in\mathbb{K}[X,Y]$. If $p$ has a separated multiple, then it has one that divides any other of its separated multiples.
\end{corollary}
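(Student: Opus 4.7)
The plan is to combine Corollary~\ref{cor:simpleness} with Theorem~\ref{thm:fried} as already hinted in the paragraph preceding the statement. I would first dispose of the degenerate case $p\in\mathbb{K}[X]\cup\mathbb{K}[Y]$: in this situation $p$ itself is a separated multiple of $p$ (for instance, $p=p-0$ when $p\in\mathbb{K}[X]$), and $p$ trivially divides every one of its multiples. So from now on assume $p\in\mathbb{K}[X,Y]\setminus(\mathbb{K}[X]\cup\mathbb{K}[Y])$ and that $p$ has some non-zero separated multiple, so that $A(\<p>)$ is nontrivial.

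By Corollary~\ref{cor:simpleness}, the algebra $A(\<p>)$ is then simple, so there is a generator $(f,g)\in\mathbb{K}[X]\times\mathbb{K}[Y]$ which is not a $\mathbb{K}$-multiple of $(1,1)$. Before invoking Theorem~\ref{thm:fried}, I have to check that $f$ and $g$ are both non-constant. If, say, $f\in\mathbb{K}$, then $f-g\in\mathbb{K}[Y]$; since $p\notin\mathbb{K}[Y]$, inspecting leading terms in $X$ in $\mathbb{K}[Y][X]$ shows that the only element of $\mathbb{K}[Y]$ lying in $\<p>$ is $0$, forcing $f=g$, both constant, and hence $(f,g)=f\cdot(1,1)$, contradicting the choice of $(f,g)$. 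The symmetric argument excludes $g\in\mathbb{K}$.

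For the main step, let $F-G$ be an arbitrary separated multiple of $p$, so that $(F,G)\in A(\<p>)$. By simplicity there are $c_0,\dots,c_k\in\mathbb{K}$ with $(F,G)=\sum_{i=0}^{k}c_i(f^i,g^i)$. Setting $h(t):=\sum_{i=0}^{k}c_it^i\in\mathbb{K}[t]$, this reads $F=h(f)$ and $G=h(g)$. If $h$ is constant then $F-G=0$ is trivially a multiple of $f-g$; otherwise $F$ and $G$ are themselves non-constant and Theorem~\ref{thm:fried} yields $f-g\mid F-G$ in $\mathbb{K}[X,Y]$. Thus $f-g$ is a separated multiple of $p$ dividing every other separated multiple of $p$, as required. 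The only mildly delicate ingredient is the verification that $f$ and $g$ are non-constant, which rests on the hypothesis $p\notin\mathbb{K}[X]\cup\mathbb{K}[Y]$; everything else is a direct assembly of the results already proved.
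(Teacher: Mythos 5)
Your proposal is correct and follows essentially the same route as the paper: by Corollary~\ref{cor:simpleness} the algebra $A(\<p>)$ is simple, so any separated multiple $F-G$ satisfies $(F,G)=(h(f),h(g))$ for the generator $(f,g)$ and some $h\in\set K[t]$, and Theorem~\ref{thm:fried} then gives $f-g\mid F-G$. Your extra checks (the case $p\in\set K[X]\cup\set K[Y]$, the non-constancy of $f$ and $g$ needed to invoke Theorem~\ref{thm:fried}, and the case of constant $h$) are details the paper passes over silently, and they are handled correctly.
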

If $p$ has a separated multiple and the corresponding algebra is generated by $(f,g)$, then $f-g$ is referred to as the \textit{minimal separated multiple} of $p$. It is unique up to a multiplicative constant.

\section{Ideals of Dimension Zero}\label{sec:dim0}

For ideals of dimension zero, the technique proposed in \cite{buchacher20} for the case $n=m=1$ generalizes
more or less literally to arbitrary $n$ and~$m$. We therefore give only an informal summary here
and refer to \cite{buchacher20} for a more formal discussion.

If $I\subseteq\set K[X,Y]$ has dimension zero, then it
contains a nonzero univariate polynomial for each of the variables. Denote these polynomials by
$p_1,\dots,p_n,q_1,\dots,q_m$. Being univariate, these polynomials are in particular separated.
This implies that $A(I)$ contains at least all pairs $(p,q)$ where $p$ is a
$\set K[X]$-linear combination of $p_1,\dots,p_n$ and $q$ is a $\set
K[Y]$-linear combination of $q_1,\dots,q_m$.  If $(f,g)$ is any other element
of~$A(I)$, we can add an arbitrary $\set K[X]$-linear combination of
$p_1,\dots,p_n$ to $f$ and an arbitrary $\set K[Y]$-linear combination of
$q_1,\dots,q_m$ to $g$ and obtain another element of~$A(I)$. It is therefore
enough to search for elements $(f,g)$ of $A(I)$ with $\deg_{x_i}f<\deg_{x_i}p_i$
and $\deg_{y_j}g<\deg_{y_j}q_j$ for all $i$ and~$j$.  This restricts the search
to a finite dimensional vector space. We can make an ansatz with undetermined
coefficients for $f$ and~$g$, compute its normal form with respect to a
Gr\"obner basis of~$I$, equate its coefficients to zero and solve the resulting
linear system for the unknown coefficients in~$\set K$. The solutions together with the
$p_1,\dots,p_n$ and their $X$-multiples as well as the $q_1,\dots,q_m$ and their
$Y$-multiples then form a set of generators of~$A(I)$.

\begin{example}
  Let $I\subseteq\set K[x_1,x_2,y_1,y_2]$ be the ideal generated by
  \begin{alignat*}1
    &x_1+x_2+y_1+y_2,\\
    &x_1 x_2+x_1 y_1+x_1 y_2+x_2 y_1+x_2 y_2+y_1 y_2,\\
    &x_1 x_2 y_1+x_1 x_2 y_2+x_1 y_1 y_2+x_2 y_1 y_2,\\
    &x_1 x_2 y_1 y_2-1.
  \end{alignat*}
  Its elimination ideals are
  \begin{alignat*}1
    I\cap\set K[x_1,x_2]&=\<x_1^3+x_1^2 x_2+x_1 x_2^2+x_2^3,x_2^4+1>,\\
    I\cap\set K[y_1,y_2]&=\<y_1^3+y_1^2 y_2+y_1 y_2^2+y_2^3,y_2^4+1>.
  \end{alignat*}
  Denoting the two generators of $I\cap\set K[x_1,x_2]$ by $p_1,p_2$, respectively,
  polynomial division shows that this ideal is generated as a $\set K$-algebra by $x_1^i x_2^j p_1$
  for $i=0,1,2$ and $j=0,1,2,3$ and $x_1^i x_2^j p_2$
  for $i=0,1,2,3$ and $j=0,1,2$. Similarly, we get a finite set of generators for the other elimination ideal.

  It remains to check whether $A(I)$ contains any elements $(p,q)$ where all terms
  in $p$ have $x_1$-degree less than $3$ and $x_2$-degree less than~$4$, and all
  terms in $q$ have $y_1$-degree less than $3$ and $y_2$-degree less than~$4$.
  It turns out that the following pairs form a basis of the $\set K$-vector space
  of all these elements:
  \begin{alignat*}1
    \binom{x_1^2+x_2^2}{-y_1^2-y_2^2},
    \binom{x_1+x_2}{-y_1-y_2},
    \binom{x_1x_2}{y_1^2+y_1 y_2+y_2^2},
    \binom{x_1^2 x_2+x_1 x_2^2}{-y_1^2y_2-y_1 y_2^2},
    \binom{x_1^2 x_2^2}{y_1^2 y_2^2}.
  \end{alignat*}
  These pairs together with the generators of the two elimination ideals form
  a finite set of generators of~$A(I)$.
\end{example}

As a $\set K$-linear subspace of $\set K[X]\times\set K[Y]$, the algebra $A(I)$
for an ideal $I$ of dimension zero has finite co-dimension. From the algebra
generators of $A(I)$ computed as described above, we can obtain a basis of a
vector space $V$ such that $V\oplus A(I)=\set K[X]\times\set K[Y]$, and for
every $(f,g)\in\set K[X]\times\set K[Y]$ we can compute a pair $(\tilde f,\tilde
g)\in V$ such that $(f,g) - (\tilde f, \tilde g) \in A(I)$. This
amounts to Lemma~2.4 of \cite{buchacher20}.

In the case $n=m=1$, every ideal can be written as the intersection of an ideal
of dimension zero and a principal ideal. This is no longer true in the general
case. However, if an ideal $I\subseteq\set K[X,Y]$ happens to be the intersection of
an ideal $I_0\subseteq\set K[X,Y]$ of dimension zero and a principal ideal
$I_1\subseteq\set K[X,Y]$, then we can continue as in Sect.~4 of \cite{buchacher20} and
obtain a finite set of generators for~$A(I)$.

Algorithm~4.3 of \cite{buchacher20} relies on $A(I_0\cap I_1)=A(I_0)\cap A(I_1)$ and uses that
$A(I_0)$ has finite codimension and $A(I_1)$ is generated by a single
element. It makes an ansatz for a polynomial in the generator of $A(I_1)$, then
finds an equivalent element in $V$ and forces that element
to zero. This results in a system of linear equations for the coefficients of the ansatz, whose solutions give rise to elements of
$A(I_0)\cap A(I_1)$. The search is repeated with an ansatz of larger and larger
degree, but always excluding all monomials that are $\set N$-linear
combinations of degrees of generators found earlier. Since $(\set N,+)$ is a
noetherian monoid, after finitely many repetitions there are no monomials left
and the list of generators is complete.

The correctness of this algorithm does not depend on the assumption $n=m=1$ but
extends literally to the case of arbitrary $n$ and~$m$. We can therefore record
the following corollary to Thm.~\ref{thm:5}.

\begin{corollary}
  Let $I\subseteq\set K[X,Y]$ be such that $I=I_0\cap I_1$ for some ideal $I_0$ of dimension zero
  and some principal ideal~$I_1$ whose generator is not in $\set K[X]\cup\set K[Y]$.
  Then $A(I)$ is finitely generated, and there is an algorithm for computing a finite set of
  generators.
\end{corollary}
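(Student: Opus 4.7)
The plan is to assemble the corollary from three ingredients that are all in place: the identity $A(I_0\cap I_1)=A(I_0)\cap A(I_1)$ (immediate from the definition, since both sides consist of the pairs $(f,g)$ with $f-g\in I_0$ and $f-g\in I_1$), the description of $A(I_0)$ developed in Section~\ref{sec:dim0}, and the description of $A(I_1)$ obtained from Corollary~\ref{cor:simpleness}. I would first run the $0$-dimensional procedure on $I_0$: compute a Gr\"obner basis of $I_0$, extract univariate polynomials $p_1,\dots,p_n,q_1,\dots,q_m$ from the elimination ideals, and solve the bounded-degree ansatz to obtain a finite generating set of $A(I_0)$; from this I also extract a finite-dimensional complement $V$ with $V\oplus A(I_0)=\set K[X]\times\set K[Y]$, together with the linear map sending $(f,g)\in\set K[X]\times\set K[Y]$ to its representative $(\tilde f,\tilde g)\in V$ modulo $A(I_0)$.

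Next I would compute a generator of $A(I_1)$. Let $p$ be the generator of $I_1$. Apply Lemma~\ref{lemma:0} after Lemma~\ref{lemma:alpha} to reduce to the situation $p(0)=0$, possibly after an algebraic extension of $\set K$. Form $P=\phi(p)\in\set K(X,Y)[s,t]$ and solve the bivariate separation problem (as recalled from \cite{buchacher20}) for $\<P>$. If $A(\<P>)$ is trivial, then by Theorem~\ref{thm:5} so is $A(I_1)$, and $A(I)=A(I_0)\cap A(I_1)$ is the (finitely generated) subalgebra of $\set K$-multiples of $(1,1)$ contained in $A(I_0)$. Otherwise, normalize a generator $(F,G)$ of $A(\<P>)$ so that $F,G$ have no denominators, $F-G$ has no factor in $\set K[X,Y]$, and $F|_{s=0}=G|_{t=0}=0$; Theorem~\ref{thm:5} then tells us whether $A(I_1)$ is trivial and, if not, returns a generator $(f,g)=(F|_{s=1},G|_{t=1})$.

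With the single generator $(f,g)$ of $A(I_1)$ in hand, any element of $A(I)\subseteq A(I_1)$ has the form $\sum_{k=0}^N c_k(f^k,g^k)$ with $c_k\in\set K$. The algorithm searches for such elements that also lie in $A(I_0)$. Concretely, make an ansatz $\sum_{k=0}^N c_k(f^k,g^k)$ of bounded degree $N$, project into $V$ using the map computed above, and equate the result to zero; this yields a linear system over $\set K$, whose solutions give new generators. Then iterate with larger $N$, but exclude each ansatz degree $N$ that is an $\set N$-linear combination of the degrees of generators already found, exactly as in Algorithm~4.3 of \cite{buchacher20}. Since $(\set N,+)$ is a noetherian monoid (Dickson's lemma), only finitely many such degrees remain to be examined, so the process terminates with a finite set of generators of $A(I_0)\cap A(I_1)=A(I)$.

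The main obstacle is not the existence of the individual subroutines but the verification that the three pieces compose correctly over arbitrary $n,m$: one must check that the codimension argument and the ``polynomial in the generator of $A(I_1)$'' argument from \cite{buchacher20} transfer verbatim to several variables. The first is clear because $A(I_0)$ contains all $X$-multiples of the $p_i$ and all $Y$-multiples of the $q_j$, so the quotient is bounded by a product of finite-dimensional quotients. The second is precisely Corollary~\ref{cor:simpleness}, which we proved for arbitrary $n,m$. Hence the Noetherian termination argument and the correctness of Algorithm~4.3 go through without change, and the corollary follows.
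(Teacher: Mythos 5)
Your proposal is correct and follows essentially the same route as the paper: it combines $A(I_0\cap I_1)=A(I_0)\cap A(I_1)$ with the zero-dimensional procedure for $A(I_0)$ (including the finite-codimension complement $V$), the simpleness of $A(I_1)$ from Theorem~\ref{thm:5} and Corollary~\ref{cor:simpleness}, and the ansatz-in-the-generator search with the noetherian-monoid termination argument of Algorithm~4.3 of \cite{buchacher20}, which is exactly how the paper justifies this corollary. The only cosmetic nit is that the triviality of $A(I_1)$ when $A(\<P>)$ is trivial follows from injectivity of $\phi$ on nontrivial elements rather than from Theorem~\ref{thm:5} itself, but this does not affect correctness.
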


\begin{example}
  As a minimalistic example, consider the
  ideal $I=I_0\cap I_1\subseteq\set K[x_1,x_2,y_1,y_2]$ with
  \[
  I_0=\<x_1-1,x_2-1,y_1-2,y_2-2>
  \quad\text{and}\quad
  I_1=\<x_1^2 + x_1 y_2 + y_2^2>.
  \]
  The algebra $A(I_0)$ is generated by $(x_1-1,0),(x_2-1,0),(0,y_1-2),(0,y_2-2)$,
  and the algebra $A(I_1)$ is generated by $g=(x_1^3,y_2^3)$.
  We need to find all univariate polynomials $p$ such that $p(g)\in A(I_0)$.

  Modulo the $\set K$-vector space $A(I_0)$, the element $g$ itself
  is equivalent to $(0,7)$, and the element $g^2$ is equivalent to
  $(0,63)$. Therefore, $g^2-9g$ is an element of~$A(I_0)$.
  This reduces the search to polynomials involving only odd powers of~$g$.
  As the element $g^3$ is equivalent modulo $A(I_0)$ to $(0,511)$,
  we find the additional element $g^3-73g$ of~$A(I)$.
  Since $2\set N+3\set N=\set N\setminus\{0,1\}$ and $A(I_0)$ does not
  contain any element of the form $\alpha g+\beta$, we can conclude that
  $A(I)=\set K[g^2-9g,g^3-73g]$.
\end{example}

\section{Arbitrary Ideals}\label{sec:general}

For an arbitrary ideal $I$ of $\set K[X,Y]$, the algebra of separated polynomials is
in general not finitely generated. It is therefore impossible to give an algorithm
that computes a complete basis in a finite number of steps. The best we
can hope for is a procedure that enumerates a set of generators and runs forever if $A(I)$ is not finitely generated, yet terminates if $A(I)$ is finitely generated. Unfortunately, we cannot offer such a procedure. However, if we drop the latter requirement, it is not hard to come up with an algorithmic solution.

For any fixed $d\in\set N$, we can find all $(f,g)\in A(I)$ where $f$ and
$g$ have total degree at most $d$ by linear algebra, similar as in the case
of zero dimensional ideals. Make an ansatz
\begin{alignat*}1
  f&=\sum_{e_1+\dots+e_n\leq d}\alpha_{e_1,\dots,e_n}x_1^{e_1}\cdots x_n^{e_n},\\
  g&=\sum_{e_1+\dots+e_m\leq d}\beta_{e_1,\dots,e_m}y_1^{e_1}\cdots y_m^{e_m}
\end{alignat*}
with undetermined coefficients $\alpha_{e_1,\dots,e_n},\beta_{e_1,\dots,e_n}$
and compute the normal form of $f-g$ with respect to a Gr\"obner basis
of~$I$. The result will be a polynomial in $X,Y$ whose coefficients are $\set
K$-linear combinations of the undetermined coefficients. Force these
coefficients to zero and solve the resulting linear system. The result
translates into a basis of the $\set K$-vector space of all pairs $(f,g)\in
A(I)$ with $f$ and $g$ of total degree at most~$d$.
By repeating this computation for $d=1,2,3,\dots$ indefinitely, we will get
a set of generators of~$A(I)$. In fact, these generators generate $A(I)$ not
only as $\set K$-algebra but even as $\set K$-vector space. This is more than
we want. We can eliminate some of the redundance in the output by discarding
from the ansatz all terms that are powers of leading terms of generators that
have been found in earlier iterations, but the approach nevertheless seems
brutal as the size of the linear system will grow rapidly with increasing~$d$.

An alternative procedure for enumerating algebra generators of $A(I)$ uses
Gr\"obner bases instead of linear algebra. For this procedure, we reuse the
idea of Sect.~\ref{sec:principal} and exploit the fact that we know how to compute
a (finite) set of generators of $A(\bar I)$ for every ideal $\bar I$ of
a bivariate polynomial ring.

Like in Sect.~\ref{sec:principal}, we consider the homomorphism
\[
 \phi\colon\set K[X,Y]\to\set K(X,Y)[s,t]
\]
which maps each $x_i$ to $sx_i$ and each $y_j$ to $ty_j$. Let $p_1,\dots,p_\ell\in\set K[X,Y]$
be generators of $I\subseteq\set K[X,Y]$, let $P_i=\phi(p_i)$ for $i=1,\dots,\ell$, and let
$\bar I$ be the ideal generated by $P_1,\dots,P_\ell$ in $\set K(X,Y)[s,t]$.
The algebra $A(\bar I)$ is finitely generated. Let $B_1,\dots,B_u$ be a choice of generators.
The homomorphism $\phi$ maps every element of $A(I)$ to an element of~$A(\bar I)$, and every
such element can be written as a polynomial in $B_1,\dots,B_u$ with coefficients in $\set K(X,Y)$.
Therefore, in order to find elements of~$A(I)$, we search for elements of $\set K(X,Y)[B_1,\dots,B_u]$
that become elements of $A(I)$ after setting $s$ and $t$ to~$1$.
This can be done effectively as soon as we can solve the following problem:

\begin{problem}\label{problem:18}
  Given: generators $p_1,\dots,p_\ell$ of $I$ and some elements
  $(F_1,G_1),\dots,(F_k,G_k)$ of $A(\bar I)$

  Find: a $\set K$-vector space basis of the set of all elements of $A(I)$ that can be
  obtained from a $\set K(X,Y)$-linear combination of $(F_1,G_1),\dots$, $(F_k,G_k)$ by
  setting $s$ and $t$ to~$1$.
\end{problem}

With an algorithm for solving this problem, we can get a procedure that enumerates generators of~$A(I)$.
For $d=1,2,\dots$ in turn, the procedure calls the algorithm with all monomials in $B_1,\dots,B_u$ of
degree at most $d$ as $(F_1,G_1),\dots,(F_k,G_k)$.

In the remainder of this section, we discuss an algorithm for solving Problem~\ref{problem:18}.
We first give a high-level description of the algorithm and prove that the approach is sound and
complete. Afterwards, we show that each of the steps can be effectively computed.

\begin{algorithm}\label{alg:19}
  Input/Output: as specified in Problem~\ref{problem:18}

  \medskip
  \step 11 Compute a basis of the $\set K[X,Y]$-module
  \[
    M := \Span_{\set K(X,Y)}(F_1-G_1,\dots,F_k-G_k)\cap\underbrace{\<\phi(p_1),\dots,\phi(p_\ell)>}_{\subseteq\set K[X,Y][s,t]}.
  \]
  Write the elements $F-G$ of $M$ in the form $(F,G)$,
  so that $M$ becomes a submodule of $\set K[X,Y][s]\times\set K[X,Y][t]$.
  (Include the pair $(1,1)$ among the generators.)
  \step 21 Compute bases of the $\set K[X]$-module
  \[
    M_X:=\{\,(F,G)\in M: F\in\set K[X][s]\,\}
  \]
  and the $\set K[Y]$-module
  \[
    M_Y:=\{\,(F,G)\in M: G\in\set K[Y][t]\,\}.
  \]
  \step 31 Compute a basis of the $\set K$-vector space $M_X\cap M_Y$.
  \step 41 Set $s=t=1$ in the basis elements and return the result.
\end{algorithm}

\begin{theorem}
  Alg.~\ref{alg:19} is sound and complete.
\end{theorem}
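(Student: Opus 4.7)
The plan is to verify both implications (soundness and completeness) by tracing the pipeline of Algorithm~\ref{alg:19}.

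For soundness, I would take a basis element $(F,G)$ of $M_X\cap M_Y$ returned by Step~3 and check that $(F|_{s=1},G|_{t=1})$ is an element of $A(I)$ of the required form. The constraints $F\in\set K[X][s]$ and $G\in\set K[Y][t]$ force $F|_{s=1}\in\set K[X]$ and $G|_{t=1}\in\set K[Y]$, and the membership of $F-G$ in $\<\phi(p_1),\dots,\phi(p_\ell)>\subseteq\set K[X,Y][s,t]$ gives, after substituting $s=t=1$, a decomposition $F|_{s=1}-G|_{t=1}=\sum_j h_j|_{s=t=1}\,p_j\in I$. So $(F|_{s=1},G|_{t=1})\in A(I)$. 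Because $F-G$ is also a $\set K(X,Y)$-linear combination of the $F_i-G_i$ (modulo the $(1,1)$-generator adjoined in Step~1), the output pair arises, up to an additive $\set K$-multiple of $(1,1)$, from a $\set K(X,Y)$-linear combination of the $(F_i,G_i)$ by $s=t=1$, as required by Problem~\ref{problem:18}.

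For completeness, suppose $(f,g)\in A(I)$ comes from $(F_\star,G_\star):=\sum_iu_i(F_i,G_i)$ by setting $s=t=1$. I would produce a witness in $M_X\cap M_Y$ by taking the canonical lift $F:=\phi(f)=f(sx_1,\dots,sx_n)\in\set K[X][s]$ and $G:=\phi(g)=g(ty_1,\dots,ty_m)\in\set K[Y][t]$, which trivially satisfies $(F|_{s=1},G|_{t=1})=(f,g)$. Writing $f-g=\sum_jq_jp_j$ inside $I$ and applying $\phi$ gives $\phi(f)-\phi(g)=\sum_j\phi(q_j)\phi(p_j)\in\<\phi(p_1),\dots,\phi(p_\ell)>$. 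The remaining and delicate step is to verify $\phi(f)-\phi(g)\in\Span_{\set K(X,Y)}(F_i-G_i)$. For this I would compare the canonical lift with the given witness $(F_\star,G_\star)$, use that $\phi(f)-F_\star$ is divisible by $(s-1)$ and $\phi(g)-G_\star$ by $(t-1)$, and exploit that both pairs live in $A(\bar I)$ with the same specialization at $s=t=1$, so that their difference can be reabsorbed into the span using the structure of $A(\bar I)$ that the $(F_i,G_i)$ were chosen to describe in the outer procedure of Sect.~\ref{sec:general}.

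The hardest part is precisely this span-membership claim in the completeness direction, since $(\phi(f),\phi(g))$ need not equal the witnessing combination $(F_\star,G_\star)$. Once it is established, $(\phi(f),\phi(g))\in M\cap(\set K[X][s]\times\set K[Y][t])=M_X\cap M_Y$, and Step~4 returns $(f,g)$, placing $(f,g)$ in the $\set K$-span of the output. Soundness, by contrast, is essentially a routine unravelling of the definitions of $M$, $M_X$, and $M_Y$.
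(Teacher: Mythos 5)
Your soundness argument is essentially the paper's: unwind the definitions of $M$, $M_X$, $M_Y$, note that membership of $F-G$ in $\<\phi(p_1),\dots,\phi(p_\ell)>$ survives the substitution $s=t=1$, and conclude $(f,g)\in A(I)$. That half is fine.

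The completeness half has a genuine gap, and it sits exactly where you flag it. The paper does not prove the statement you are attempting: it takes ``the corresponding $(F,G)$'' to be the canonical lift $(\phi(f),\phi(g))$ and makes its membership in $\Span_{\set K(X,Y)}(F_1-G_1,\dots,F_k-G_k)$ part of the \emph{hypothesis} of the completeness claim. Under that hypothesis everything is immediate: $f-g=\sum_j q_jp_j$ gives $\phi(f)-\phi(g)\in\<\phi(p_1),\dots,\phi(p_\ell)>$, and $\phi(f)\in\set K[X][s]$, $\phi(g)\in\set K[Y][t]$ place the lift in $M_X\cap M_Y$, so $(f,g)$ is in the $\set K$-span of the output. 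This version is exactly what the enumeration procedure of Sect.~\ref{sec:general} needs, since $\phi$ maps $A(I)$ into $A(\bar I)$ and the inputs $(F_i,G_i)$ are all monomials in the algebra generators $B_1,\dots,B_u$ up to degree~$d$, so for $d$ large enough the canonical lift of any fixed element of $A(I)$ is such a combination. You instead start from an arbitrary witness $(F_\star,G_\star)=\sum_i u_i(F_i,G_i)$ specializing to $(f,g)$ and must transfer span-membership from $(F_\star,G_\star)$ to $(\phi(f),\phi(g))$. Your sketch --- divisibility of $\phi(f)-F_\star$ by $s-1$ and of $\phi(g)-G_\star$ by $t-1$, with the difference ``reabsorbed into the span using the structure of $A(\bar I)$'' --- does not yield a $\set K(X,Y)$-linear relation expressing $\phi(f)-\phi(g)$ in terms of $F_1-G_1,\dots,F_k-G_k$: both pairs lie in $A(\bar I)$ and agree at $s=t=1$, but $A(\bar I)$ is an algebra, not the $\set K(X,Y)$-span of the finitely many given elements, and for arbitrary inputs $(F_1,G_1),\dots,(F_k,G_k)\in A(\bar I)$ (which is all Problem~\ref{problem:18} assumes) there is no reason the canonical lift should lie in that span just because some other combination specializes to $(f,g)$. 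So either supply a proof of this transfer step (you have not), or do as the paper does and state and prove completeness relative to the canonical lift, which is the case actually invoked by the outer procedure.
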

\begin{proof}
\textit{Soundness.} We show that every pair $(f,g)$ in the output indeed belongs to~$A(I)$.
If $(f,g)$ is an element of the output, then it is clear from Step~3
and the definition of $M_X,M_Y$ that $f\in \set K[X]$ and $g\in \set K[Y]$. We need to show
that $f-g\in I$.
Let $F,G$ be the polynomials from which $f$ and $g$ are obtained by setting $s$ and~$t$ to~$1$.
Then $(F,G)$ is an element of~$M$, therefore $F-G$ is an element of $\<\phi(p_1),\dots,\phi(p_l)>$,
and therefore $f-g$ is an element of~$I$.

\textit{Completeness.} We show that if $(f,g)\in A(I)$ is such that the corresponding
\[
(F,G)\in\set K(X,Y)[s]\times\set K(X,Y)[t]
\]
is a $\set K(X,Y)$-linear combination of the elements $(F_1,G_1),\dots,(F_k,G_k)$, then it is
a $\set K$-linear combination of the output pairs.
By assumption, $F-G\in\Span_{\set K(X,Y)}(F_1-G_1,\dots,F_k-G_k)$.
Also, since $f-g\in I$, we have $F-G\in\<\phi(p_1),\dots,\phi(p_l)>$.
Therefore, $(F,G)$ belongs to the module $M$ computed in Step~1.
Moreover, we have $F\in \set K[X][s]$ and $G\in \set K[Y][t]$ because $f\in \set K[X]$ and $g\in \set K[Y]$,
so $(F,G)\in M_X\cap M_Y$. The claim follows.
\end{proof}

Step~4 of Alg.~\ref{alg:19} is trivial, and Step~2 is a standard application of Gr\"obner
bases. For example, in order to get a basis of $M_X$, it suffices to compute a Gr\"obner
basis of $M$ with respect to a TOP term order that eliminates~$Y$, and to discard from it
all elements which have a $Y$ in the first component~\cite[Definition 3.5.2]{adams2022introduction}.
Steps~1 and~3 require more explanation.

For Step~1, we divide the problem into two substeps. First we compute a basis of the $\set K[X,Y]$-module
\[
 N:=\Span_{\set K(X,Y)}(F_1-G_1,\dots,F_k-G_k)\cap\set K[X,Y][s,t],
\]
and then we obtain a basis of $M$ by computing the intersection of this $N$ with the ideal generated
by $\phi(p_1),\dots,\phi(p_\ell)$ in~$\set K[X,Y][s,t]$. The two substeps are provided by the following lemmas.

\begin{lemma}\label{lemma:cap}
  For any given $q_1,\dots,q_k\in\set K[X,Y][s,t]$, we can compute a basis of the $\set K[X,Y]$-module
  \[
    \Span_{\set K(X,Y)}(q_1,\dots,q_k)\cap\set K[X,Y][s,t].
  \]
\end{lemma}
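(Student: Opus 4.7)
The plan is to reduce the statement to a finite-rank linear-algebra problem over $R := \set K[X,Y]$ and its fraction field $K := \set K(X,Y)$, which can then be solved by a standard syzygy computation over~$R$. Any $K$-linear combination of $q_1,\dots,q_k$ has its support in $s,t$ contained in the finite set $T := \bigcup_{i}\supp_{s,t}(q_i)$. Writing $q_i = \sum_{m\in T} c_{i,m}\,m$ with $c_{i,m}\in R$ and setting $N := |T|$, I would identify each $q_i$ with its coefficient vector $\vec q_i := (c_{i,m})_{m\in T} \in R^N$. Under this identification, the problem becomes to compute a generating set of
\[
V \cap R^N, \qquad V := \Span_K(\vec q_1,\dots,\vec q_k)\subseteq K^N,
\]
after which the generators can be translated back into polynomials in $R[s,t]$.

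For the main step, set $r := \dim_K V$. After discarding redundant columns via Gaussian elimination over $K$, assume $\vec q_1,\dots,\vec q_r$ are $K$-linearly independent, and let $A$ be the $N\times r$ matrix with these columns. Find $r$ row indices on which the corresponding $r\times r$ submatrix $B$ of $A$ is nonsingular over~$K$, and let $C$ be the remaining $(N-r)\times r$ block after a row permutation. Every $\vec w\in V$ factors uniquely as $\vec w = A\vec a$ for some $\vec a\in K^r$, and decomposing $\vec w = (\vec u,\vec v)$ with $\vec u\in K^r$ and $\vec v\in K^{N-r}$ gives $\vec a = B^{-1}\vec u$ and $\vec v = CB^{-1}\vec u$. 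Clearing denominators, write $CB^{-1} = \tfrac{1}{d}D'$ with $d\in R\setminus\{0\}$ and $D'\in R^{(N-r)\times r}$. Then $\vec w\in V\cap R^N$ iff $\vec u\in R^r$ and $D'\vec u = d\vec v$ for some $\vec v\in R^{N-r}$, i.e., $(\vec u,\vec v)$ is a syzygy over $R$ of the columns of the block matrix $[\,D'\mid -d\,I_{N-r}\,]\in R^{(N-r)\times N}$. Computing generators of this syzygy module is a standard Gr\"obner basis task; undoing the row permutation and re-assembling each $(\vec u,\vec v)$ into a polynomial $\sum_{m\in T} w_m\,m \in R[s,t]$ yields the required basis.

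The only real thing to check is that the correspondence $\vec w\leftrightarrow(\vec u,\vec v)$ is indeed a bijection of $R$-modules between $V\cap R^N$ and the computed syzygy module. One direction is immediate from the definitions of $A,B,C,D',d$; for the converse, any syzygy $(\vec u,\vec v)\in R^N$ reconstructs $\vec a := B^{-1}\vec u\in K^r$ for which $A\vec a$ has first $r$ components $B\vec a=\vec u$ and remaining components $CB^{-1}\vec u = \tfrac{1}{d}D'\vec u = \vec v$, so $A\vec a = (\vec u,\vec v)\in V\cap R^N$. Beyond this bookkeeping, every step (rank determination, selection of $B$, inversion of $B$ over $K$, clearing denominators, syzygy computation over~$R$) is a routine effective operation, so the principal obstacle is not algorithmic but rather setting up the reduction to finite rank cleanly and choosing $B$ so that the linear algebra over $K$ interacts correctly with the integrality condition over~$R$.
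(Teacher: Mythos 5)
Your proposal is correct and follows essentially the same route as the paper: identify the $q_i$ with coefficient vectors in $\set K[X,Y]^N$, realize the $\set K(X,Y)$-span as the kernel of a matrix with entries in $\set K[X,Y]$ (your block matrix $[\,D'\mid -d\,I\,]$ is exactly such a matrix, constructed explicitly where the paper merely notes it exists by linear algebra and clearing denominators), and compute the intersection with $\set K[X,Y]^N$ as a syzygy module via Gr\"obner bases. The extra bookkeeping with $B$, $C$ and the two-directional check of the correspondence is fine and adds explicitness without changing the argument.
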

\begin{proof}
  As only finitely many monomials appear in $q_1,\dots,q_k$, we can view them as elements of a finitely
  generated $\set K[X,Y]$-submodule of $\set K[X,Y][s,t]$. We may identify this submodule with $\set K[X,Y]^n$
  for some~$n$. In this identification, $\Span_{\set K(X,Y)}(q_1,\dots,q_k)$ is a certain subspace
  of~$\set K(X,Y)^n$. Let $A\in\set K(X,Y)^{m\times n}$ be a matrix whose kernel is this subspace.
  Such a matrix exists and can be easily constructed by means of linear algebra.
  As multiplying $A$ by a nonzero element of $\set K(X,Y)$ does not change the kernel, we may assume
  that $A$ belongs to $\set K[X,Y]^{m\times n}$.
  Let $a_1,\dots,a_n\in\set K[X,Y]^m$ be its columns.
  Then
  \[
  \Span_{\set K(X,Y)}(q_1,\dots,q_k)\cap\set K[X,Y]^n
  =\syz(a_1,\dots,a_m).
  \]
  The computation of a basis of the syzygy module is a standard application of Gr\"obner bases.
\end{proof}

\begin{lemma}
  Let $N$ be a finitely generated $\set K[X,Y]$-submodule of $\set K[X,Y][s,t]$ and
  let $J$ be an ideal of $\set K[X,Y][s,t]$.
  Then $N\cap J$ is a finitely generated submodule of $\set K[X,Y][s,t]$, and we can
  compute a basis of it from a module basis of $N$ and an ideal basis of~$J$.
\end{lemma}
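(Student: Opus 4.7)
The plan is to reduce the problem to two standard Gr\"obner basis computations: a syzygy computation followed by a module elimination. Let $g_1,\dots,g_r\in\set K[X,Y][s,t]$ be a $\set K[X,Y]$-module generating set of $N$, let $f_1,\dots,f_\ell$ be ideal generators of $J$, and write $R:=\set K[X,Y][s,t]$. An element $q$ belongs to $N\cap J$ precisely when there are $c_1,\dots,c_r\in\set K[X,Y]$ and $d_1,\dots,d_\ell\in R$ with $q=c_1 g_1+\cdots+c_r g_r=d_1 f_1+\cdots+d_\ell f_\ell$, so the task is to find all tuples $(c_1,\dots,c_r)$ whose components lie in $\set K[X,Y]$ (rather than in the full ring $R$) and which can be completed to a syzygy of $(g_1,\dots,g_r,-f_1,\dots,-f_\ell)$ over $R$.

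First I would compute, by a standard Gr\"obner basis method, a finite generating set of the syzygy module $\Sigma:=\syz_R(g_1,\dots,g_r,-f_1,\dots,-f_\ell)\subseteq R^{r+\ell}$. Discarding the last $\ell$ components of each generator yields an $R$-module generating set of the projection
\[
S:=\{(c_1,\dots,c_r)\in R^r : c_1 g_1+\cdots+c_r g_r\in J\}\subseteq R^r,
\]
and the evaluation map $(c_1,\dots,c_r)\mapsto\sum_i c_i g_i$ is a $\set K[X,Y]$-linear bijection between $S\cap\set K[X,Y]^r$ and $N\cap J$.

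The main obstacle is computing $S\cap\set K[X,Y]^r$, because this mixes two different module structures: $S$ is an $R$-submodule of $R^r$, while we want its ``$s,t$-free'' part as a $\set K[X,Y]$-module. I would handle this by module elimination: equip $R^r$ with a term order under which every monomial involving $s$ or $t$ dominates every monomial free of $s$ and $t$ (for instance, a block order that compares the $s,t$-parts first and only then the remaining variables and positions), compute a Gr\"obner basis of $S$ with respect to this order, and retain exactly those basis elements whose components all lie in $\set K[X,Y]$. A standard module-elimination argument---essentially the one invoked for Step~2 of Alg.~\ref{alg:19}, cf.~\cite[Definition 3.5.2]{adams2022introduction}---shows that these retained elements $\set K[X,Y]$-generate $S\cap\set K[X,Y]^r$: any $m$ in the intersection has only $s,t$-free terms, so every reduction step with the chosen order must subtract a $\set K[X,Y]$-multiple of a Gr\"obner basis element that is itself $s,t$-free.

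Applying the evaluation map to this finite $\set K[X,Y]$-generating set produces the asserted finite generating set of $N\cap J$. Soundness is immediate from the construction, and completeness follows from the bijection above: any $q\in N\cap J$ determines a coefficient tuple in $S\cap\set K[X,Y]^r$, hence a $\set K[X,Y]$-linear combination of the retained basis elements, which transports back to a $\set K[X,Y]$-linear combination of the outputs.
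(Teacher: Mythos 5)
Your proposal is correct and follows essentially the same route as the paper: form the syzygy module of $(g_1,\dots,g_r,-f_1,\dots,-f_\ell)$, discard the last $\ell$ components, eliminate $s,t$ via a suitable module term order, and transport the result through the evaluation map $(c_1,\dots,c_r)\mapsto\sum_i c_ig_i$. One tiny quibble: this evaluation map is in general only a surjection onto $N\cap J$ (the $g_i$ may satisfy $\set K[X,Y]$-syzygies), not a bijection as you claim, but since your argument only uses surjectivity and $\set K[X,Y]$-linearity, the conclusion is unaffected.
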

\begin{proof}
  Let $n_1,\dots,n_r$ be module generators of~$N$ and $p_1,\dots,p_k$ be ideal generators of~$J$.
  An element $q$ of $\set K[X,Y][s,t]$ belongs to $N\cap J$ if and only if there are
  $\alpha_1,\dots,\alpha_r\in\set K[X,Y]$ and $\beta_1,\dots,\beta_k\in\set K[X,Y][s,t]$
  such that
  \begin{alignat*}1
    q &= \alpha_1n_1 + \cdots + \alpha_rn_r\\
      &= \beta_1p_1 + \cdots + \beta_kp_k.
  \end{alignat*}
  By taking the difference of these two representations of~$q$, we see that the relevant
  tuples
  \[
  (\alpha_1,\dots,\alpha_r,\beta_1,\dots,\beta_k)
  \]
  are precisely the elements of
  \[
    \syz(n_1,\dots,n_r,-p_1,\dots,-p_k)\cap\set K[X,Y]^r\times\set K[X,Y][s,t]^k.
  \]
  We can first compute a Gr\"obner basis of the syzygy module in $\set K[X,Y][s,t]^{r+k}$, then
  discard the lower $k$ coordinates, and then eliminate $s$ and~$t$.
  This yields a basis of the $\set K[X,Y]$-module that contains a tuple
  $(\alpha_1,\dots,\alpha_r)\in\set K[X,Y]^r$ if and only if $\alpha_1n_1+\cdots+\alpha_rn_r\in N\cap J$.
  A basis of this module thus translates into a basis of $N\cap J$.
\end{proof}

We now turn to Step~3 of Alg.~\ref{alg:19}, where we have to compute the intersection of a finitely generated
$\set K[X]$-submodule $M_X$ of $\set K[X,Y][s,t]^2$ with a finitely generated $\set K[Y]$-submodule $M_Y$ of
$\set K[X,Y][s,t]^2$. The result is a $\set K$-vector space, and the task is to compute a basis of this vector
space.

Let $b_1,\dots,b_u$ be a basis of~$M_X$ and $c_1,\dots,c_v$ be a basis of~$M_Y$.
Like in the proof of Lemma~\ref{lemma:cap}, we seek $\alpha_1,\dots,\alpha_u\in\set K[X]$
and $\beta_1,\dots,\beta_v\in\set K[Y]$ such that
\begin{equation}\label{eq:2}
 \alpha_1b_1+\cdots+\alpha_ub_u=\beta_1c_1+\cdots+\beta_vc_v.
\end{equation}
If we can get hold of a finite set of monomials that contains all the monomials which can possibly appear in
$\alpha_1,\dots,\alpha_u,\beta_1,\dots,\beta_v$, then we can find
$\alpha_1,\dots,\alpha_u,\beta_1,\dots,\beta_v$ by making an ansatz with undetermined coefficients,
plugging it into the above equation, comparing coefficients, and solving a linear system over~$\set K$.
Every solution vector translates into a solution $(\alpha_1,\dots,\alpha_u,\beta_1,\dots,\beta_v)\in\set K[X]^u\times\set K[Y]^v$
of equation~\eqref{eq:2}, and every such solution translates into an
element $\alpha_1b_1+\cdots+\alpha_ub_u$ of the intersection $M_X\cap M_Y$.
The following lemma tells us how to find the required monomials.

\begin{lemma}
  Let $(\alpha_1,\dots,\alpha_u,\beta_1,\dots,\beta_v)\in\set K[X]^u\times\set K[Y]^v$ be
  a solution of \eqref{eq:2}, let $i\in\{1,\dots,v\}$, and let $\tau=y_1^{e_1}\cdots y_m^{e_m}$
  be a monomial appearing in~$\beta_i$.
  Let $G$ be a Gr\"obner basis of
  \[
  \syz(b_1,\dots,b_u,-c_1,\dots,-c_v)\subseteq\set K[X,Y]^{u+v}
  \]
  with respect to a TOP order that eliminates~$Y$. Then there exists a monomial $\sigma=x_1^{\varepsilon_1}\cdots x_n^{\varepsilon_n}$
  and an element $g\in G$ such that the first $u$ components are free of $Y$ and the $(u+i)$th component contains
  the monomial $\sigma\tau$.
\end{lemma}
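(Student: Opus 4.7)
The plan is to represent $s=(\alpha_1,\dots,\alpha_u,\beta_1,\dots,\beta_v)$ as a combination of elements of the subset
\[
G_0:=\{g\in G:(g)_1,\dots,(g)_u\in\set K[X]\}
\]
with coefficients in $\set K[X]$, and then to read off the required Gr\"obner basis element from the $(u+i)$-th component of this representation.

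The first step would be to establish that every element of the $\set K[X]$-submodule
\[
S_X:=\{m\in\syz(b_1,\dots,-c_v):(m)_1,\dots,(m)_u\in\set K[X]\}
\]
can be written as $\sum_{g\in G_0}h_g\,g$ with $h_g\in\set K[X]$. This is a positional refinement of the standard elimination theorem for modular Gr\"obner bases and is the same fact that justifies Step~2 of Algorithm~\ref{alg:19}. The idea is that because the monomial order eliminates $Y$, the leading term of any $m\in S_X$ sits at a position $>u$ (or at a position $\le u$ with an $X$-only monomial, in which case $m$ itself is already $Y$-free at all positions); tracing the standard division algorithm shows that each reduction step can be performed using an element of $G_0$ and an $X$-only multiplier, keeping the intermediate remainders inside $S_X$. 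Applied to $s\in S_X$, this yields the desired representation.

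The second step is to extract the $\tau$-coefficient of the $(u+i)$-th component. From the representation,
\[
\beta_i=\sum_{g\in G_0}h_g\cdot (g)_{u+i}.
\]
Viewing both sides as polynomials in $Y$ with coefficients in $\set K[X]$ and writing $[\tau]q$ for the coefficient of $\tau$ in a polynomial $q\in\set K[X,Y]$ regarded as an element of $\set K[X][Y]$, this gives the identity
\[
[\tau]\beta_i=\sum_{g\in G_0}h_g\cdot [\tau](g)_{u+i}
\]
in $\set K[X]$. Since $\beta_i\in\set K[Y]$, the left hand side is a constant in $\set K$, and by assumption it is nonzero. Consequently, at least one summand $h_g\cdot[\tau](g)_{u+i}$ on the right hand side must be nonzero, so there exists a $g\in G_0$ with $[\tau](g)_{u+i}\neq 0$; this precisely means that $(g)_{u+i}$ contains a monomial of the form $\sigma\tau$ with $\sigma$ a monomial in $X$, which is the desired conclusion.

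The main obstacle lies in the first step. The classical elimination theorem for TOP $Y$-eliminating Gr\"obner bases delivers a generating set of $M\cap\set K[X]^{u+v}$, i.e., the case of \emph{all} positions being $Y$-free. Here one needs the weaker positional condition that only the first $u$ positions are $Y$-free, and one must argue carefully that the leading-term structure of elements of $G\setminus G_0$ never forces their use in reducing an element of $S_X$; once that is in place, the remainder of the argument is straightforward.
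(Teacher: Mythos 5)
Your overall strategy is the same as the paper's: write the given solution as a $\set K[X]$-linear combination of those Gr\"obner basis elements whose first $u$ components are free of $Y$, and then extract the coefficient of $\tau$ from the $(u+i)$th component. Your second step is correct and matches the paper's concluding argument (a $\set K[X]$-linear combination cannot create new $Y$-monomials, so a nonzero $\tau$-coefficient must come from some $g\in G_0$).

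The genuine gap is in your first step, and it is twofold. First, the ``positional elimination theorem'' you want is false in the generality in which you state it. The set $S_X$ constrains only the first $u$ components, so it contains, for instance, every $\set K[X,Y]$-multiple of any syzygy of $c_1,\dots,c_v$ that is supported on the last $v$ positions (and such syzygies exist in general); the $Y$-degrees of these elements are unbounded, whereas every component of an element of the $\set K[X]$-span of the finite set $G_0$ has $Y$-degree bounded by the largest $Y$-degree occurring in $G_0$. Hence $S_X$ need not even be finitely generated as a $\set K[X]$-module, and no finite $G_0$ can generate it; the representation claim must be restricted to vectors whose last $v$ components also lie in $\set K[Y]$, which is exactly what the paper does (it asserts $q=q_1g_1+\cdots+q_\ell g_\ell$ with $q_j\in\set K[X]$ only for the given solution $q\in\set K[X]^u\times\set K[Y]^v$). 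Second, even for such $q$ you do not actually establish the claim: you appeal to ``tracing the division algorithm,'' and you yourself flag this as the main obstacle. Note that when the leading term of the current remainder is a $Y$-monomial at a position $>u$, the Gr\"obner basis property only guarantees a reducer whose leading term sits at that position and divides it; nothing forces that reducer to lie in $G_0$, and the corresponding multiplier is then a $Y$-monomial rather than an element of $\set K[X]$, so the invariant you want (remainders staying in $S_X$, multipliers in $\set K[X]$) is not preserved by a single division step without further argument. This is precisely the point where the real work lies (the paper's own proof is also terse here, but it at least confines the claim to $q$), whereas your stated version of the claim is refuted outright by the degree obstruction above.
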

\begin{proof}
  A vector in $\set K[X]^u\times\set K[Y]^v$ is a solution of \eqref{eq:2} if and only if it belongs to the syzygy module.
  The given solution $q$ must therefore reduce to zero modulo~$G$.
  By the choice of the term order, only elements of $G$ whose first $u$ components are free of $Y$ will be used during the reduction.
  Call these elements $g_1,\dots,g_\ell$.
  Again by the choice of the term order, these elements of $G$ will only be multiplied by elements of $\set K[X]$ during
  the reduction, i.e., we will have $q=q_1g_1+\cdots+q_\ell g_\ell$ for certain $q_1,\dots,q_\ell\in\set K[X]$.
  The $(u+i)$th component of $q$ contains the monomial~$\tau$,
  so this monomial appears in a $\set K[X]$-linear combination of the $(u+i)$th components of $g_1,\dots,g_\ell$.
  As $\set K[X]$-linear combinations cannot create new $Y$-monomials, some $\set K[X]$-multiple of $\tau$ must
  already appear in at least one of the $g_1,\dots,g_\ell$.
\end{proof}

With the help of this lemma, we obtain for each $i\in\{1,\dots,v\}$ a finite list of candidates of monomials that
may appear in~$\beta_i$. Applying the lemma again with the roles of $X$ and $Y$ exchanged, we can
also obtain for each $i\in\{1,\dots,u\}$ a finite list of candidates of monomials that may appear in~$\alpha_i$.
This is all we need in order to complete Step~3 of Alg.~\ref{alg:19}.

\begin{example}
  Let us use Alg.~\ref{alg:19} to search for a nontrivial element of $A(I)$ for the ideal
  \[
  I=\<y_1^2-x_2 y_2,x_2^2-x_1 y_1,x_1^4 x_2 y_1-x_2 y_1 y_2^4>.
  \]
  The corresponding ideal $\bar I$ has dimension~$0$, and $A(\bar I)$ contains
  $(s^6,0)$ and $(0,t^6)$. Taking these elements as $(F_1,G_1)$ and $(F_2,G_2)$,
  we find in Step~1 that $M$ is generated by the following vectors:
  \begin{alignat*}1
    &\binom{0}{x_2 y_1^4t^6-x_1 y_1^3 y_2t^6},
   \binom{x_2^4 y_1s^6-x_1 x_2^3 y_2s^6}{0},
   \binom{x_1^3 x_2^3s^6}{y_1^3 y_2^3t^6},\\
   &\binom{x_2^6 y_2^6s^6}{y_1^{12}t^6},
  \binom{x_2^7 y_2^5s^6}{x_1 y_1^{11}t^6},
  \binom{x_2^8 y_2^4s^6}{x_1^2 y_1^{10}t^6},
  \binom{x_2^9 y_2^3s^6}{x_1^3 y_1^9t^6},\\
  &\binom{x_2^{10} y_2^2s^6}{x_1^4 y_1^8t^6},
  \binom{x_2^{11} y_2s^6}{x_1^5 y_1^7t^6},
  \binom{x_2^{12}s^6}{x_1^6 y_1^6t^6},\binom11.
  \end{alignat*}
  In Step~2, we find
  \[
  M_X=\left \< \binom{0}{x_2 y_1^4t^6-x_1 y_1^3 y_2t^6},
  \binom{x_1^3 x_2^3s^6}{y_1^3 y_2^3t^6},
  \binom{x_2^{12}s^6}{x_1^6 y_1^6t^6},\binom11\right>
  \]
  and
  \[
  M_Y= \left \<
  \binom{x_2^4 y_1s^6 -x_1 x_2^3 y_2s^6}{0},
  \binom{x_1^3 x_2^3s^6}{y_1^3 y_2^3t^6},
  \binom{x_2^6 y_2^6t^6}{y_1^{12}t^6},\binom11
\right  >
  \]
  Step~3 yields
  \[
  M_X\cap M_Y=\Span_{\set K}\left(\binom{x_1^3 x_2^3s^6}{y_1^3 y_2^3t^6},\binom11\right),
  \]
  and the final result is $(x_1^3 x_2^3,y_1^3 y_2^3)$.
\end{example}

At the end of the day, Alg.~\ref{alg:19} also has to solve a linear system, but it can be expected that the size of
these linear systems grows more moderately than in the naive approach sketched at the beginning of the section.
On the other hand, Alg.~\ref{alg:19} achieves this size reduction via Gr\"obner basis computations,
so it is not clear which of the two approaches is better.
It is noteworthy however
that the two approaches are not equivalent. For example, if $A(\bar I)$ happens to be trivial, then $A(I)$ is trivial as well,
and therefore detected by the reduction to the bivariate case. The approach based exclusively on linear algebra cannot detect that.

Unlike in the case of principal ideals, it is easy to find examples where $A(I)$ is trivial but $A(\bar I)$ is not.

\begin{example}\label{ex:final}
  Consider the ideal $I\subseteq K[x_1,x_2,y_1,y_2]$ generated by $-x_1 + y_1 + x_1 x_2 y_2 - x_2 y_1 y_2$ and $-x_1 + y_1 + x_1^2 y_1 - x_1 y_1^2$.
  As its generating set is a Gr\"obner basis, it is clear that $I$
  cannot contain any separated polynomials, because in order to reduce a separated
  polynomial to zero, the Gr\"obner basis would need elements with a leading term
  only involving $x_1,x_2$ or only involving $y_1,y_2$.
  On the other hand, for the ideal $\bar I=\<-s x_1 + t y_1 + s^2 t x_1 x_2 y_2 - s t^2 x_2 y_1 y_2, -s x_1 + t y_1 +
 s^2 t x_1^2 y_1 - s t^2 x_1 y_1^2>\subseteq\set K(x_1,x_2,y_1,y_2)[s,t]$
  we have $\bar I=\<sx_1 - ty_1>$ and therefore $A(\bar I)$ is different from $\mathbb{K}((1,1))$.
\end{example}

\section{Conclusion}

We made some progress on the problem of separating variables in multivariate polynomial ideals.
While the algorithm for ideals of dimension zero generalizes smoothly from the bivariate case
to the multivariate case, we did not find a straightforward generalization of the construction
for principal ideals. Instead, we showed that it is possible to reduce the multivariate case
to the bivariate case by merging variables. As a result, we obtain that the algebra of separated
polynomials is simple for every principal ideal generated by a polynomial involving at least one
variable from each of the two groups of variables. It follows furthermore that the algebra is
finitely generated for every ideal that is the intersection of a principal ideal and an ideal
of dimension zero. For arbitrary ideals, however, the algebra may not be finitely generated.
In this case, we can enumerate generators of the algebra, but it remains open whether it is
possible to arrange the enumeration in such a way that it terminates whenever the algebra
happens to be finitely generated. 

\bibliographystyle{plain}
\bibliography{bib}

\end{document}